\newcommand{\borda}[1]{\overset{\rightarrow}{#1}}
\newcommand{\semborda}[1]{\overset{\leftarrow}{#1}}
\newtheorem{theorem}{Theorem}[section]
\newtheorem{corollary}{Corollary}[section]
\newtheorem{lemma}{Lemma}[section]
\author[M.C. Dourado \and V.S. Ponciano \and R.L.O. da Silva]{Mitre C. Dourado\affiliationmark{1}\thanks{Partially supported by Conselho Nacional de Desenvolvimento Cient\'ifico e Tecnol\'ogico, Brazil, Grant number 305404/2020-2.}
  \and Vitor S. Ponciano\affiliationmark{1}\thanks{Partially supported by Coordena\c{c}\~ao de Aperfei\c{c}oamento de Pessoal de N\'ivel Superior, Brazil.}
  \and R\^omulo L. O. da Silva\affiliationmark{1,2}}
\title{On the monophonic rank of a graph}
\affiliation{
  Instituto de Computa\c{c}\~ao, Universidade Federal do Rio de Janeiro, Rio de Janeiro, Brazil\\
  Faculdade de Ci\^encia e Tecnologia, Instituto de Matem\'atica, Universidade Federal do Par\'a, Par\'a, Brazil}
\keywords{bipartite graph, cactus graph, $k$-starlike graph, monophonically convex set, rank of a graph, triangle-free graph}
\begin{document}
\publicationdetails{24}{2022}{2}{3}{6835}	
\maketitle
\begin{abstract}
A set of vertices $S$ of a graph $G$ is \emph{monophonically convex} if every induced path joining two vertices of $S$ is contained in $S$. The \emph{monophonic convex hull of $S$}, $\langle S \rangle$, is the smallest monophonically convex set containing $S$. A set $S$ is \emph{monophonic convexly independent} if $v \not\in \langle S - \{v\} \rangle$ for every $v \in S$. The \emph{monophonic rank} of $G$ is the size of the largest monophonic convexly independent set of $G$. We present a characterization of the monophonic convexly independent sets. Using this result, we show how to determine the monophonic rank of graph classes like bipartite, cactus, triangle-free and line graphs in polynomial time. Furthermore, we show that this parameter can be computed in polynomial time for $1$-starlike graphs, \textit{i.e.}, for split graphs, and that its determination is $\NP$-complete for $k$-starlike graphs for any fixed $k \ge 2$, a subclass of chordal graphs. We also consider this problem on the graphs whose intersection graph of the maximal prime subgraphs is a tree.
\end{abstract}

\section{Introduction}

A family $\mathcal {C}$ of subsets of a finite set $X$ is a \emph{convexity on $X$} if $\varnothing, X \in\mathcal{C}$ and $\mathcal{C}$ is closed under intersection~(\cite{van1993theory}).
Given a graph $G$ and a family ${\cal P}$ of paths of $G$, the \emph{${\cal P}$-interval of a set $S \subseteq V(G)$} is formed by $S$ and all vertices of every path of ${\cal P}$ between vertices of $S$. The set $S$ is \emph{${\cal P}$-convex} if $S$ is equal to its ${\cal P}$-interval. The \emph{${\cal P}$-convex hull of $S$} is the minimum ${\cal P}$-convex set containing $S$. It is easy to see that the ${\cal P}$-convex sets form a convexity on $V(G)$. Indeed, the most studied graph convexities are defined in this way. For instance, the well-known \emph{geodetic convexity} has ${\cal P}$ as the family of shortest paths~(\cite{PELAYO,dou1}), the \emph{$P_3$ convexity} has ${\cal P}$ as the family of paths of order 3~(\cite{CAMPOS,CENTENO20113693}), and in the \emph{monophonic convexity}, ${\cal P}$ is the family of all induced paths~(\cite{DOURADO1,DUCHET1}).

The set $S$ is said to be \textit{${\cal P}$-convexly independent} if for every $u \in S$, it holds that $u$ does not belong to the ${\cal P}$-convex hull of $S - \{u\}$. The size of a maximum ${\cal P}$-convexly independent set of $G$ is called the \textit{${\cal P}$-rank of $G$}. 
We are interested in the computational complexity of determining the rank of a graph in the monophonic convexity.

\bigskip
\noindent {\sc Monophonic rank}\\
\begin{tabular}{lp{14.5cm}}
	Instance: & A graph $G$ and a positive integer $k$. \\
	Question: & Does $G$ contain a monophonic convexly
	independent set  with at least $k$ vertices?
\end{tabular}
\bigskip

In~\cite{ramos:hal-01185615}, it was shown that {\sc Monophonic rank} is \NP-complete for prime graphs, which are the graphs not containing clique separators. For trees, the maximum convexly independent set corresponds to the set of leaves.
In the same work, it was shown that the problem of deciding whether a graph has rank at least $k$ in the $P_3$ convexity is \NP-complete even for split graphs and for bipartite graphs.
On the other hand, it was shown that this parameter can be easily determined for threshold graphs in the $P_3$ convexity, and for trees in the $P_3$ and geodetic convexities.
In the $P_{3}$ convexity, this problem is solvable in polynomial time for biconnected graphs and it was conjectured that it can also be solved in polynomial time for interval graphs.

In~\cite{kante2017geodetic}, the authors studied the computational complexity of finding a maximum convexly independent set in the geodetic convexity. They showed that this problem cannot be approximated within a factor of $n^{1-\epsilon}$, unless $\P=\NP$. They also presented polynomial-time algorithms for computing this parameter for $P_4$-sparse, split graphs, and distance-hereditary graphs.

The text is organized as follows. 
Useful notation is presented in the end of this section.
In Section~\ref{sec:charac}, we present a characterization of the monophonic rank of a graph, which is useful in some results in the sequel.
In Section~\ref{sec:bipartite}, we show that the problem of computing the monophonic rank can be solvable in polynomial time if the graph is $K_4$-free and has clique separators with at most 2 vertices, which includes bipartite, cactus, and triangle-fre graphs.
In Section~\ref{sec:inter}, we deal with the graphs whose intersection graph of the maximal prime subgraphs is a tree, which includes line graphs.
In Section~\ref{sec:starlike}, we show that this problem is solvable in polynomial time for 1-starlike graphs, \textit{i.e.}, for split graphs, and is \NP-complete for $k$-starlike graphs for any fixed $k \ge 2$, a subclass of chordal graphs.

We conclude this section by presenting some useful notation.
We consider only simple, finite and undirected graphs.
The open and the closed neighborhoods of a vertex $v$ are denoted by $N(v)$ and $N[v]$, respectively. Vertices $u$ and $v$ are \emph{twins} if $N[u] = N[v]$.
The interval and the convex hull of a set $S$ in the monophonic convexity are denoted by $[S]$ and $\langle S \rangle$, respectively.
We can use m-convexly independent standing for monophonically convexly independent. The monophonic rank of a graph $G$ is denoted by $r(G)$. The size of a maximum clique of $G$ is denoted by $\omega(G)$.
Given a family of graphs ${\cal C}$, we denote the intersection graph of the members ${\cal C}$ by $\Omega({\cal C})$.

\section{Characterization} \label{sec:charac}

In this section, we present a characterization of the m-convexly independent sets of a general graph. This result is used in the algorithm of Section~\ref{sec:bipartite} to determine the monophonic rank of graph classes like bipartite, cactus and triangle-free graphs in polynomial time.

A connected graph is \emph{prime} if it does not contain any clique separator.
A \emph{maximal prime subgraph of $G$}, or simply an \emph{mp-subgraph of $G$}, is a maximal induced subgraph of $G$ that is prime. Given an mp-subgraph $M$, denote the vertices of $M$ belonging to other mp-subgraphs by $\borda{M}$, and denote $\semborda{M} = V(M) - \borda{M}$.
We say that $M$ is a \emph{petal} if $\semborda{M} \ne \varnothing$ and $\borda{M}$ is a clique.
We say that $M$ is \emph{extreme} if there is an mp-subgraph $M' \ne M$ such that $\borda{M} \subseteq \borda{M'}$.
Note that every extreme mp-subgraph is also a petal mp-subgraph.
We say that $H$ is an \emph{flower subgraph of $G$} if $H = G$ or $H = G' - \semborda{M}$ where $G'$ is a flower subgraph of $G$ and $M$ is a petal mp-subgraph of $G'$.
Denote by ${\cal F}(G), {\cal M}(G)$, and ${\cal P}(G)$ the families of the flower subgraphs of $G$, of the mp-subgraphs of $G$, and of the petal mp-subgraphs of $G$, respectively.

For $M \in {\cal P}(G)$, we say that $X \subseteq \semborda{M}$ is a \emph{stamen set for $M$ in $G$} if

\begin{itemize}
	\item $|X| = 1$ (Type~1); or
	\item $X = \{u_1,u_2\}$ and $\borda{M} \subset N(u_i)$ for $i \in \{1,2\}$ (Type~2); or
	\item $X \cup \borda{M}$ is a clique and $|X| \ge 2$ (Type~3).
\end{itemize}

\begin{theorem} \label{the:mhp} \emph{(\cite{DOURADO1})}
	Every two non-adjacent vertices of a prime graph form a monophonic hull set.
\end{theorem}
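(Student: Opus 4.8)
The plan is to reduce the statement to a characterization of monophonically convex sets in terms of the connected components of their complement. Concretely, I would first establish the auxiliary fact that a set $S \subseteq V(G)$ is m-convex if and only if, for every connected component $C$ of $G - S$, the attachment set $N(C) \cap S$ induces a clique. Granting this, the theorem follows quickly from primality: if $u$ and $v$ are non-adjacent and $H = \langle \{u,v\} \rangle$ were a proper subset of $V(G)$, then $G - H$ has some component $C$ whose attachment $N(C) \cap H$ is a clique; since $u$ and $v$ are non-adjacent they cannot both lie in this clique, so at least one of them, say $u$, lies in $H - N(C)$. But then $N(C)$ is a clique that separates $C$ from $u$, contradicting the assumption that $G$ is prime. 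Hence $H = V(G)$, i.e. $\{u,v\}$ forms a monophonic hull set.

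The first step, proving the characterization, is where the real work lies, and I expect the converse direction to be the main obstacle. For the easy direction, suppose every component $C$ of $G - S$ has clique attachment and let $P$ be an induced path between two vertices of $S$; if $P$ left $S$ it would enter and exit some component $C$ at two distinct vertices $a, b \in N(C) \cap S$, which are then adjacent, giving a chord of $P$ between non-consecutive vertices and contradicting that $P$ is induced. For the converse, suppose some component $C$ has two non-adjacent attachment vertices $a, b$; I would take a path from $a$ to $b$ whose interior lies entirely in $C$ (such a path exists because $C$ is connected and $a, b \in N(C)$) of minimum length, and argue that a shortest such path must be induced, exhibiting an induced path between vertices of $S$ that leaves $S$ and so showing $S$ is not m-convex. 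The delicate point is ensuring the chosen path is genuinely induced; I would handle this via minimality of its length together with the fact that $a$ and $b$ are non-adjacent, so that any chord would yield a strictly shorter admissible path.

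Finally, I would verify the separation claim used in the reduction: since $C$ is a connected component of $G - H$, every edge leaving $C$ has its other endpoint in $N(C) \subseteq H$, so deleting the clique $N(C)$ disconnects $C$ from every vertex of $H - N(C)$; as $u$ is such a vertex and $C \neq \varnothing$, this indeed exhibits a clique separator, which the definition of \emph{prime} forbids, completing the contradiction. I would also note that the statement is vacuous when $G$ is complete, which is consistent with the hypothesis that $G$ contains two non-adjacent vertices.
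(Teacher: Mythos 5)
Your argument is correct, but note that the paper itself offers no proof of this statement: Theorem~\ref{the:mhp} is imported verbatim from the cited reference, so there is nothing internal to compare against. What you have written is a sound, self-contained proof along the standard lines of the literature. Your auxiliary characterization --- $S$ is m-convex iff $N(C)\cap S$ is a clique for every component $C$ of $G-S$ --- is exactly the classical description of monophonically convex sets (due essentially to Duchet, and used in the reference the paper cites), and both directions of your sketch go through: the chord argument for sufficiency is fine once one observes that a maximal excursion of an induced path outside $S$ stays inside a single component and is flanked by two non-consecutive attachment vertices; and for necessity, a shortest $a$--$b$ path with interior in $C$ is induced in $G$ because any chord among interior vertices or from an endpoint into the interior shortens it, while the chord $ab$ is excluded by hypothesis. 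The deduction of the theorem is also clean: for $H=\langle\{u,v\}\rangle\ne V(G)$ and a component $C$ of $G-H$, the set $N(C)$ lies entirely in $H$, is a clique, and cannot contain both of the non-adjacent vertices $u,v$, so it separates $C$ from one of them, contradicting primality. The only points worth writing out carefully in a full version are the two small facts you already flag: that $N(C)\subseteq H$ (so the attachment set really is $N(C)$ itself), and that $u\notin C\cup N(C)$ so that $u$ and $C$ genuinely land in different components of $G-N(C)$.
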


\begin{theorem} \label{the:charac}
	A set $S \subseteq V(G)$ of a connected graph $G$ is m-convexly independent if and only if for every mp-subgraph $M$ of the minimum flower subgraph $H$ of $G$ containing $\langle S \rangle$, if $M$ is a leaf, then $S \cap V(M)$ is contained in a stamen set of $M$ in $H$, otherwise $S \cap V(M) = \varnothing$.
\end{theorem}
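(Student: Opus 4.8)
The plan is to reduce the global convex-independence condition to a collection of local conditions, one at each mp-subgraph of $H$, using Theorem~\ref{the:mhp} as the engine inside each prime piece. The starting observation is a dichotomy for a prime graph $M$: since every clique is monophonically convex and, by Theorem~\ref{the:mhp}, any two non-adjacent vertices of $M$ generate all of $V(M)$, one has for every $Y \subseteq V(M)$ that $\langle Y\rangle_M = Y$ when $Y$ is a clique and $\langle Y\rangle_M = V(M)$ otherwise. When $G$ is itself prime this already settles the statement: $H = G$ is the unique (leaf) mp-subgraph with $\borda{G} = \varnothing$, its stamen sets are exactly the sets of size at most~$2$ together with the cliques, and the dichotomy shows these are precisely the m-convexly independent sets. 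This prime case is the base of the argument.

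Next I would isolate a \emph{localization lemma} describing how the hull interacts with a clique separator. The key combinatorial fact is that an induced path meets a clique in at most two vertices, which must be consecutive on the path; hence an induced path between two vertices lying in $V(G) - \semborda{M}$ can never enter $\semborda{M}$, for a petal $M$ with clique boundary $C = \borda{M}$. Consequently the hull localizes: $\langle S\rangle \cap V(M) = \langle (S \cap \semborda{M}) \cup (\langle S\rangle \cap C)\rangle_M$, so the part of the hull inside a petal is governed by the interior points of $S$ together with the trace of the hull on the separator $C$. Iterating this over the petals peeled to form $H$, I obtain that m-convex independence of $S$ is equivalent to a local independence condition holding simultaneously at every mp-subgraph of $H$; note also that $\langle S\rangle \subseteq V(H)$ by definition of $H$, and that $\semborda{M}\cap\langle S\rangle \ne \varnothing$ for every petal $M$ of $H$, since otherwise $M$ could be peeled, contradicting minimality of $H$.

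The third step is the local analysis at a leaf petal $M$ of $H$, writing $X = S \cap \semborda{M}$ and $D_v = \langle S - \{v\}\rangle \cap C$. Combining localization with the prime dichotomy applied to $M$ gives that a vertex $v \in X$ satisfies $v \notin \langle S - \{v\}\rangle$ exactly when $(X - \{v\}) \cup D_v$ is a clique. Requiring this for every $v \in X$ and splitting on $|X|$ yields the three stamen types: $|X| \le 1$ is Type~1; if $|X| = 2$ with the two vertices non-adjacent, a clique is forced only when each interior vertex dominates $C$, giving Type~2; and if $|X| \ge 3$ then $X$ must be a clique and, since the separator vertices adjacent to the $X$-vertices are pulled into the hull, $X \cup C$ must be a clique, giving Type~3. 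For a non-leaf mp-subgraph $M$ of $H$, minimality of $H$ forces hull points in at least two of the branches meeting $M$; any candidate $v \in S \cap V(M)$ then lies on an induced path joining $S$-points of two such branches through the separator, so $v \in \langle S - \{v\}\rangle$, giving $S \cap V(M) = \varnothing$. The two directions of the theorem follow by assembling these local statements through the localization lemma.

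I expect the main obstacle to be the precise control of $D_v = \langle S - \{v\}\rangle \cap C$, that is, determining exactly which separator vertices are forced into the hull. This is the step that couples the local stamen condition to the global flower structure and that produces the domination requirements in Types~2 and~3; handling it rigorously requires tracking hull membership along the separators across the whole mp-subgraph tree of $H$. A secondary difficulty is the non-leaf emptiness argument when $|C| \ge 2$, since an induced path can avoid a prescribed separator vertex by rerouting through another vertex of the clique, so one must still exhibit an induced path through the separator that recovers each vertex of a non-leaf piece.
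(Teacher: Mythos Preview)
Your plan is essentially the paper's argument, repackaged through an explicit localization lemma where the paper instead builds the required induced paths by hand; both hinge on Theorem~\ref{the:mhp} inside each prime piece and on the petal/non-petal split of ${\cal M}(H)$. The obstacle you single out---pinning down which vertices of $C=\borda{M}$ lie in $\langle S-\{v\}\rangle$---is precisely what the paper handles (tersely) by producing $u'\in S\setminus V(M)$ and an induced $(u',y)$-path through the needed separator vertex, so your anticipated difficulty is real but is resolved by the same construction. One small case your outline does not cover explicitly, and which the paper isolates as a separate first step, is $v\in S\cap\borda{M}$ when every mp-subgraph through $v$ happens to be a petal; there your non-leaf argument does not formally apply, but the same two-branch construction (take $S$-points in $\semborda{M}$ and in $\semborda{M'}$ for two petals $M,M'$ containing $v$, which exist by minimality of $H$) shows $v\in\langle S-\{v\}\rangle$ and closes the gap.
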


\begin{proof}
	First, consider that $S$ is an m-convexly independent set of $G$.
	
	We begin showing that no vertex of $S$ belongs to $\borda{M}$ for any $M \in {\cal M}(H)$. Suppose the contrary and let $w \in \borda{M_1}$ for some mp-subgraph $M_1$ of $H$. Let $C_w$ be a clique separator of $H$ containing $w$ contained in $M_1$. Let $M_2 \ne M_1$ be an mp-subgraph containing $C_w$.
	Define $H_1$ as the maximum subgraph of $H$ containing $M_1$ and not containing $M_2 - C_w$. Analogously, define $H_2$ as the maximum subgraph of $H$ containing $M_2$ and not containing $M_1 - C_w$.
	By the minimality of $H$, there are $u_1 \in (S \cap V(H_1)) - C_w$ and $u_2 \in (S \cap V(H_2)) - C_w$.
	Now, let $P_1$ be an induced $(u_1,w)$-path of $H_1 - (C_w - \{w\})$ and let $P_2$ be an induced $(u_2,w)$-path of $H_2 - (C_w - \{w\})$.
	These two paths imply that $w \in \langle u_1,u_2 \rangle$, which contradicts the assumption that $S$ is an m-convexly independent of $G$.
	
	Suppose that there is an mp-subgraph $M \in {\cal M}(H) - {\cal P}(H)$ such that $w \in V(M) \cap S$.
	Let $u,v \in \borda{M}$ such that $uv \not\in E(G)$. Let $C_u$ and $C_v$ be clique separators of $H$ such that $u \in C_u \subset V(M)$ and $v \in C_v \subset V(M)$. Now, define $H_u$ as a maximum connected subgraph of $H$ containing $M$ as a leaf with $\borda{M} = C_u$, and define $H_v$ as a maximum connected subgraph of $H$ containing $M$ as a leaf with $\borda{M} = C_v$.
	
	Note that $H - V(M)$ is a disconnected graph. Therefore, by the minimality of $H$, there are $u' \in (S \cap V(H_u)) - V(M)$ and $v' \in (S \cap V(H_v)) - V(M)$.
	Now, let $P_u$ be an induced $(u',u)$-path of $H_u - (C_u - \{u\})$, let $P_v$ be an induced $(v',v)$-path of $H_v - (C_v - \{v\})$, and let $P$ be an induced $(u,v)$-path of $M$.
	These three paths imply that $u,v \in \langle u',v' \rangle$. Now, Theorem~\ref{the:mhp} implies that $w \in \langle u',v' \rangle$, contradicting the assumption that $S$ is an m-convexly independent of $G$.
	
	Now, suppose that there is $M \in {\cal P}(H)$ such that $W = V(M) \cap S$ is not a stamen set for $M$ in $H$. Suppose that $x,y \in W - \borda{M}$. We can assume that $yu \not\in E(G)$ for some $u \in \borda{M}$, which means that $x \in \langle u,y \rangle$ and since $u \in \langle u',y \rangle$ for some vertex $u' \in S - V(M)$, we have a contradiction.
	
	Conversely, let $w \in S$. By the definition of stamen set, we conclude that $w \in \semborda{M}$ of some petal mp-subgraph of $H$, \textit{i.e.}, $\borda{M}$ is a clique. If $V(M) \cap S = \{w\}$, it is clear that $w \not\in \langle S - \{w\} \rangle$, then we can assume that $|V(M) \cap S| \ge 2$ and $w \in \langle S - \{w\} \rangle$. But this contradicts the definition of stamen set, because in any case that $|V(M) \cap S| \ge 2$, it holds that $\borda{M} \subseteq N(w)$, which contradicts $w \in \langle S - \{w\} \rangle$.
\end{proof}

As a consequence, we can express the monophonic rank of a graph in terms of its flower subgraphs and the stamen sets of its petal mp-subgraphs.

\begin{corollary} \label{cor:mrank}
	For any graph $G$, $r(G) = \underset{G' \in {\cal F}(G)}{\max} \left\{\underset{M \in {\cal P}(G')}{\sum} s(M) \right\}$ where $s(M)$ stands for the maximum size of a stamen set of $M$ in $G'$.
\end{corollary}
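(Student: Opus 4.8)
The plan is to establish the two inequalities $r(G) \le \max_{G' \in {\cal F}(G)} \sum_{M \in {\cal P}(G')} s(M)$ and $r(G) \ge \max_{G' \in {\cal F}(G)} \sum_{M \in {\cal P}(G')} s(M)$ separately, using Theorem~\ref{the:charac} in both directions as the bridge between m-convexly independent sets and stamen sets. The guiding observation extracted from Theorem~\ref{the:charac} is that every vertex of an m-convexly independent set $S$ must lie in the interior $\semborda{M}$ of a petal mp-subgraph $M$ of its minimum flower subgraph $H$: no vertex of $S$ sits in any boundary $\borda{M}$, nor in a non-petal mp-subgraph, and within each petal the trace $S \cap V(M)$ is a stamen set of $M$ in $H$. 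Since the interiors $\semborda{M}$ are pairwise disjoint, $S$ splits as a disjoint union indexed by ${\cal P}(H)$.

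For the upper bound I would start from a maximum m-convexly independent set $S$, so $|S| = r(G)$, and let $H$ be the minimum flower subgraph containing $\langle S \rangle$. Applying the splitting above, $|S| = \sum_{M \in {\cal P}(H)} |S \cap V(M)|$, and since each $S \cap V(M)$ is a stamen set of $M$ in $H$ we have $|S \cap V(M)| \le s(M)$. Summing yields $r(G) = |S| \le \sum_{M \in {\cal P}(H)} s(M) \le \max_{G' \in {\cal F}(G)} \sum_{M \in {\cal P}(G')} s(M)$, because $H \in {\cal F}(G)$.

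For the lower bound I would show that each flower subgraph is achievable. Fix $G' \in {\cal F}(G)$ and, for every $M \in {\cal P}(G')$, choose a stamen set $X_M \subseteq \semborda{M}$ with $|X_M| = s(M)$; put $S = \bigcup_{M \in {\cal P}(G')} X_M$. Disjointness of the interiors gives $|S| = \sum_{M \in {\cal P}(G')} s(M)$. Once I argue that $G'$ is exactly the minimum flower subgraph containing $\langle S \rangle$, Theorem~\ref{the:charac} applies directly: for each petal $M$ we get $S \cap V(M) = X_M$ (the vertices of every other $X_{M'}$ lie in $\semborda{M'}$, disjoint from $V(M)$), which is a stamen set, and for each non-petal $M$ the same disjointness gives $S \cap V(M) = \varnothing$. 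Hence $S$ is m-convexly independent, and maximizing over $G'$ yields $r(G) \ge \max_{G' \in {\cal F}(G)} \sum_{M \in {\cal P}(G')} s(M)$.

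The hard part will be the claim that $G'$ is precisely the minimum flower subgraph containing $\langle S \rangle$, which rests on two facts. First, the vertex set of any flower subgraph is m-convex: each pruning step deletes an interior $\semborda{M}$ bounded by the clique $\borda{M}$, so an induced path of $G$ leaving and re-entering $V(G')$ would use two distinct, hence adjacent, vertices of $\borda{M}$ and thereby acquire a chord; thus no induced path between vertices of $V(G')$ escapes it, and in particular $\langle S \rangle \subseteq V(G')$. Second, $G'$ admits no further pruning while still containing $S$: every petal $M \in {\cal P}(G')$ satisfies $\varnothing \ne X_M \subseteq \semborda{M} \cap S$, so deleting any $\semborda{M}$ would discard a vertex of $S \subseteq \langle S \rangle$; hence no flower subgraph properly contained in $G'$ contains $\langle S \rangle$, forcing $G'$ to be the minimum one. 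I expect this convexity-and-minimality bookkeeping, rather than the counting, to be the delicate point; it is also what makes the sum range correctly over all of ${\cal P}(G')$, since the union of interior stamen sets meets each mp-subgraph only inside its own interior and so satisfies the stamen-set condition on every petal simultaneously, whether or not that petal is a leaf of $G'$.
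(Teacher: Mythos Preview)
The paper states Corollary~\ref{cor:mrank} without proof, treating it as an immediate consequence of Theorem~\ref{the:charac}; your two-inequality argument via that theorem is exactly the natural way to fill in the details, and both directions are set up correctly.

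One subtle point deserves a remark. In the lower-bound direction you invoke the \emph{converse} of Theorem~\ref{the:charac}, which is phrased in terms of ``the minimum flower subgraph $H$ of $G$ containing $\langle S\rangle$''. To apply it you must know $H=G'$. Your justification shows (i) $\langle S\rangle\subseteq V(G')$ by convexity of flower-subgraph vertex sets, and (ii) no flower subgraph obtained from $G'$ by a further petal-deletion still contains $S$. Strictly speaking (ii) rules out only those flower subgraphs of $G$ that are reachable from $G'$ by pruning petals of $G'$; you are tacitly using that every flower subgraph of $G$ contained in $G'$ arises this way. That fact is true, but it is not entirely free and the paper never states it. A cleaner route that avoids this bookkeeping is to bypass the converse of Theorem~\ref{the:charac} and verify directly that your $S=\bigcup_M X_M$ is m-convexly independent: for $w\in X_M$, either $|X_M|=1$ and then $S-\{w\}\subseteq V(G')\setminus\semborda{M}$, which is itself the vertex set of a flower subgraph and hence m-convex, so $w\notin\langle S-\{w\}\rangle$; or $|X_M|\ge 2$, in which case $\borda{M}\subseteq N(w)$ and $N_G(w)\subseteq V(M)$, so the clique $\borda{M}\cup\{w\}$ blocks any induced path through $w$, exactly as in the last paragraph of the proof of Theorem~\ref{the:charac}. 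Either way the result follows without needing to identify the minimum flower subgraph explicitly.
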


Corollary~\ref{cor:mrank} is used in Sections~\ref{sec:bipartite} and~\ref{sec:inter} to solve {\sc Monophonic rank} in polynomial time for some graph classes.

\section{Bipartite, cactus and triangle-free graphs} \label{sec:bipartite}

In this section, we show how to compute the monophonic rank of a class containing the bipartite, cactus and triangle-free graphs in polynomial time. We say that a graph $G$ is \emph{bipartite} if $V(G)$ can be partitioned into two independent sets, and that $G$ is a \emph{cactus} if every maximal 2-connected subgraph of $G$ is a cycle or an edge.
We define $\Gamma_1$ as the class containing the graphs $G$ such that $G$ is $K_4$-free and if an mp-subgraph $B$ of $G$ contains a $K_3$, then $B$ is isomorphic to $K_3$.
Note that $\Gamma_1$ contains the bipartite, cactus and triangle-free graphs.
The following property of general graphs is important for this purpose.

\begin{lemma} \label{lem:numberofleaves}
	Let $G$ be any graph. For every flower subgraph $G'$ of $G$, it holds $|{\cal P}(G)| \ge |{\cal P}(G')|$.
\end{lemma}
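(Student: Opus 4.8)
The plan is to prove this by induction on the number of petal-removal steps needed to construct the flower subgraph $G'$ from $G$. Recall that by definition, $G'$ is obtained from $G$ by a sequence of operations, each of which takes a flower subgraph and deletes $\semborda{M}$ for some petal mp-subgraph $M$. Since $|{\cal P}(G)| \ge |{\cal P}(G')|$ is transitive along such a chain, it suffices to establish the single-step claim: if $G'' = G' - \semborda{M}$ where $M$ is a petal mp-subgraph of $G'$, then $|{\cal P}(G')| \ge |{\cal P}(G'')|$. The main work is therefore to understand how the family of petal mp-subgraphs changes when we delete the private vertices $\semborda{M}$ of one petal.

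First I would analyze the effect of deleting $\semborda{M}$ on the set ${\cal M}(G')$ of all mp-subgraphs. The key structural fact I want to exploit is that the decomposition of a graph into maximal prime subgraphs is controlled by its clique separators and is essentially the clique-sum tree structure; deleting the private part $\semborda{M}$ of the petal $M$ should remove $M$ itself from the decomposition while leaving the remaining mp-subgraphs intact (their vertex sets and the clique separators gluing them are unaffected, since $\semborda{M}$ belongs to no other mp-subgraph). Thus I expect an injection from ${\cal M}(G'')$ into ${\cal M}(G') \setminus \{M\}$, which already accounts for the count dropping by (at least) one on the side of all mp-subgraphs. The delicate point is that we are counting \emph{petal} mp-subgraphs, not all mp-subgraphs, and deleting $M$ could in principle change whether some other mp-subgraph $M'$ is a petal, because the condition ``$\borda{M'}$ is a clique and $\semborda{M'} \ne \varnothing$'' depends on which vertices of $M'$ lie in other mp-subgraphs.

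The heart of the argument, and what I expect to be the main obstacle, is verifying that removing the petal $M$ cannot destroy petal-status of another mp-subgraph without compensation — more precisely, that the number of petals does not \emph{increase} under the deletion. When we delete $\semborda{M}$, the mp-subgraph $M$ (which was a petal) disappears; for a neighboring mp-subgraph $M'$ that shared the clique $\borda{M}$, some of its boundary vertices might now cease to belong to any other mp-subgraph, potentially enlarging $\semborda{M'}$ and turning a non-petal into a petal. I would argue this enlargement is benign: if $M'$ was not a petal in $G'$ because $\borda{M'}$ was not a clique, losing $M$ as a neighbor can only shrink $\borda{M'}$, and I must check that $\borda{M'}$ remaining a clique in $G''$ together with $\semborda{M'} \ne \varnothing$ forces $M'$ to have already been a petal in $G'$ or else matches the removed petal $M$ bijectively. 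The careful bookkeeping here is to set up the injection from ${\cal P}(G'')$ into ${\cal P}(G')$ explicitly: each surviving petal of $G''$ corresponds to an mp-subgraph of $G'$ that was itself a petal, and the deleted $M$ provides the slack for the strict-or-equal inequality.

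Finally, I would assemble these pieces: the single-step inequality $|{\cal P}(G')| \ge |{\cal P}(G'')|$ follows from exhibiting the injection ${\cal P}(G'') \hookrightarrow {\cal P}(G')$, and then induction on the length of the defining chain of petal removals yields $|{\cal P}(G)| \ge |{\cal P}(G')|$ for every flower subgraph $G'$. I expect the routine part to be confirming that the mp-decomposition behaves well under deletion of a private vertex set, and the subtle part to be the petal-preservation argument described above, since it requires reasoning about how boundaries $\borda{\cdot}$ and clique-separator structure interact across the single removed petal.
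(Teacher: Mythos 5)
Your overall strategy -- reduce the lemma to the single-step claim for $G'' = G' - \semborda{M}$ with $M$ a petal, then control how the family of petals changes -- is the same as the paper's (the paper runs its induction on the number of mp-subgraphs of $G$ and argues by contradiction rather than inducting on the length of the removal chain, but that difference is cosmetic). However, there is a genuine gap in your structural analysis of the single step: your claim that the remaining mp-subgraphs are left intact, giving an injection from ${\cal M}(G'')$ into ${\cal M}(G') \setminus \{M\}$, is false. Deleting $\semborda{M}$ can create a \emph{new} mp-subgraph, namely the clique induced by $\borda{M}$. Concretely, let $G'$ be a triangle $abc$ with pendant edges $ad$ and $be$. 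Then ${\cal M}(G') = \{\{a,b,c\},\{a,d\},\{b,e\}\}$, and $M=\{a,b,c\}$ is a petal with $\borda{M}=\{a,b\}$ and $\semborda{M}=\{c\}$. In $G''=G'-c$ (the path $d,a,b,e$) the mp-subgraphs are $\{a,b\},\{a,d\},\{b,e\}$: the edge $\{a,b\}$ is an mp-subgraph of $G''$ that is not one of $G'$, and $|{\cal M}(G'')| = |{\cal M}(G')|$, so no injection into ${\cal M}(G')\setminus\{M\}$ can exist.

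This is not a pedantic point; it is exactly what the paper's case distinction between extreme and non-extreme petals is for, and your sketch cannot be completed without it. The failure mode you must exclude is that deleting the one petal $M$ creates \emph{two or more} new petals: a priori, distinct vertices of $\borda{M}$ could each lie in just one further mp-subgraph, so several neighboring mp-subgraphs could all gain private vertices simultaneously, all become petals, and push $|{\cal P}(G'')|$ above $|{\cal P}(G')|$. What forbids this is precisely the dichotomy the paper exploits: either (i) $\borda{M}$ is contained in another mp-subgraph $M''$ ($M$ extreme), in which case every vertex of $\borda{M}$ still lies in $M''$ after the deletion, so $M''$ is the \emph{only} mp-subgraph whose boundary can shrink, and this single candidate new petal is charged against the removed $M$; or (ii) $\borda{M}$ is contained in no other mp-subgraph ($M$ non-extreme), in which case $\borda{M}$ survives as a new mp-subgraph of $G''$ -- so every vertex of $\borda{M}$ still lies in at least two mp-subgraphs, no boundary shrinks, no old mp-subgraph becomes a petal, and the new mp-subgraph itself has empty interior, hence is not a petal. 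Your matching argument is recoverable, but only once this dichotomy (and its consequences: at most one new petal in case (i), none in case (ii)) is proved; as written, your injection at the level of ${\cal M}$ is false and the bookkeeping at the level of ${\cal P}$ is incomplete.
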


\begin{proof}
	We use induction on the number of mp-subgraphs of $G$. First, consider that $G$ has 2 mp-subgraphs. Note that $|{\cal P}(G)| = 2$ and that $|{\cal P}(G')| = 1$ for any of its flower subgraphs. Now, consider that $G$ has $k$ mp-subgraphs for $k \ge 3$.
	Suppose that there is a flower subgraph $G'$ of $G$ with $|{\cal P}(G)| < |{\cal P}(G')|$. Since $G'$ can be obtained from $G$ by iteratively removing a petal mp-subgraph, there is a petal mp-subgraph $M$ of $G$ not present in $G'$. Define $G'' = G - \semborda{M}$.
	
	We claim that $|{\cal P}(G)| \ge |{\cal P}(G'')|$. If $M$ is not an extreme mp-subgraph, then there is exactly one mp-subgraph $M'$ in $G''$ that is not in $G$, namely, the one formed by the vertices of $\borda{M}$. Observe that ${\cal M}(G'') = ({\cal M}(G) - \{M\}) \cup \{M'\}$. Since the intersections among the mp-subgraphs of $G$ and $G''$ are the same, it holds that $G''$ has no new petal mp-subgraph and $|{\cal P}(G)| > |{\cal P}(G'')|$ for this case because $M'$ is not a petal mp-subgraph of $G''$.
	
	Now, if $M$ is an extreme mp-subgraph, then every mp-subgraph of $G''$ is also present in $G'$ because $\borda{M}$ is contained in an mp-subgraph $M''$. Then, ${\cal M}(G'') = {\cal M}(G) - \{M\}$. It is possible that $M''$ be a petal mp-subgraph of $G''$ not present in $G$, but this cannot happen to any other mp-subgraph because for each one its intersection is contained in mp-subgraphs also present in $G$. Since $M$ is a leaf of $G$ not present in $G''$, it holds $|{\cal P}(G)| \ge |{\cal P}(G'')|$, and the claim is true.
	
	Finally, by the induction hypothesis, we have $|{\cal P}(G'')| \ge |{\cal P}(G')|$, which is a contradiction.
\end{proof}

\begin{theorem} \label{the:Gamma}
	For a graph $G \in \Gamma_1$, $r(G) = n_1 + 2n_2$ where $n_1$ is the number of petal mp-subgraphs of Type~$1$, $n_2$ is the number of petal mp-subgraphs of Type~$2$ of the flower subgraph $G'$ of $G$ such that $|{\cal P}(G')| = |{\cal P}(G)|$ with maximum number of leaves of Type~$2$.
\end{theorem}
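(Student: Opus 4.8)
The plan is to read off $r(G)$ from Corollary~\ref{cor:mrank} after showing that the $\Gamma_1$ hypothesis collapses every bound $s(M)$ to the value $1$ or $2$. First I would observe that a flower subgraph $G'$ of $G$ is an induced subgraph of $G$, hence still $K_4$-free, so $\omega(G') \le 3$. For a petal $M$ of $G'$ with $\borda{M} \ne \varnothing$, a Type~3 stamen set would need $X \cup \borda{M}$ to be a clique with $|X| \ge 2$; since $|X \cup \borda{M}| \le \omega(G') \le 3$ and $|\borda{M}| \ge 1$, this forces $|X| = 2$ and $|\borda{M}| = 1$, and then the single boundary vertex is adjacent to both vertices of $X$, so $X$ is already a Type~2 stamen set. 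Thus in $\Gamma_1$ every petal satisfies $s(M) \in \{1,2\}$, with $s(M) = 2$ exactly when $M$ admits a Type~2 stamen set. Writing $n_2(G')$ for the number of such petals and using $n_1(G') + n_2(G') = |\mathcal{P}(G')|$, Corollary~\ref{cor:mrank} rewrites as
\[
  r(G) = \max_{G' \in \mathcal{F}(G)} \bigl(|\mathcal{P}(G')| + n_2(G')\bigr).
\]

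The lower bound then follows by evaluating the right-hand side at the flower subgraph $G'$ named in the statement: there $|\mathcal{P}(G')| = |\mathcal{P}(G)|$ and $n_2(G') = n_2$, so $|\mathcal{P}(G')| + n_2(G') = (n_1 + n_2) + n_2 = n_1 + 2n_2$. For the matching upper bound I would prove that the maximum is always attained at a flower subgraph whose number of petals is as large as possible, which by Lemma~\ref{lem:numberofleaves} equals $|\mathcal{P}(G)|$; among those the defining property of $G'$ makes $n_2$ maximal, giving $\max_{G'}\bigl(|\mathcal{P}(G')| + n_2(G')\bigr) = |\mathcal{P}(G)| + n_2 = n_1 + 2n_2$.

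To justify restricting to the largest petal count, I would analyse how $|\mathcal{P}(G')| + n_2(G')$ changes under a single petal removal $G'' = G' - \semborda{M}$, reusing the case distinction from the proof of Lemma~\ref{lem:numberofleaves}. If $M$ is not extreme, the petal count drops by one while no surviving petal is affected, so the quantity strictly decreases. If $M$ is extreme, the only mp-subgraph that can change status is the exposed $M''$ with $\borda{M} \subseteq \borda{M''}$: it keeps its vertex set and merely sees $\borda{M''}$ shrink and $\semborda{M''}$ grow, so its stamen bound can only increase. When $M''$ does not become a petal the count again drops by one while $n_2(G')$ grows by at most one, so the quantity does not increase; when $M''$ becomes a new petal the count is preserved. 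Hence every count-lowering removal leaves $|\mathcal{P}(G')| + n_2(G')$ unchanged or smaller, while count-preserving removals keep $|\mathcal{P}(\cdot)| = |\mathcal{P}(G)|$.

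Putting these together, I would take a flower subgraph attaining the maximum with the largest petal count among all maximizers and argue that it must have $|\mathcal{P}(G')| = |\mathcal{P}(G)|$: otherwise its immediate predecessor in a removal sequence ending with a count-lowering step would be a maximizer with strictly larger petal count, a contradiction. The hard part will be twofold. First, the extreme-case bookkeeping must be made precise -- that deleting $\semborda{M}$ leaves $M''$ intact as an induced subgraph, that no third mp-subgraph changes its petal status or stamen type, and hence that $n_2$ rises by at most one. Second, a maximizer may admit only removal sequences whose last step is count-preserving; to force a count-lowering step to be last I would use that petal removals lying in different branches of the maximal-prime-subgraph decomposition commute, which both legitimizes the exchange and shows that trading petals for Type~2 upgrades can never beat the full-count optimum.
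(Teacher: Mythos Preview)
Your approach is essentially the paper's: use Corollary~\ref{cor:mrank}, observe that the $\Gamma_1$ hypothesis forces $s(M)\in\{1,2\}$ for every petal (so the objective becomes $|\mathcal{P}(G')|+n_2(G')$), and then appeal to Lemma~\ref{lem:numberofleaves} to pin the optimum at a flower subgraph with $|\mathcal{P}(G')|=|\mathcal{P}(G)|$ and maximum $n_2$. The paper's proof is in fact terser than yours on the last step --- it simply asserts that Lemma~\ref{lem:numberofleaves} yields the conclusion, without the removal-by-removal bookkeeping or the commuting argument you outline; so the ``hard parts'' you isolate are details the paper leaves implicit rather than a different strategy.
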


\begin{proof}
	First, observe that every petal mp-subgraph of a graph $G \in \Gamma_1$ has Type~1 or~2.
	Now, let $S$ be an m-convexly independent set of $G$ with $|S| = r(G)$ and let $H$ be the minimum flower subgraph containing the vertices of $\langle S \rangle$.
	By Corollary~\ref{cor:mrank}, for every petal mp-subgraph $M$ of $H$, $|S \cap \semborda{M}|$ is equal to the size of a maximum stamen set of $M$ in $H$.
	Since $\Gamma_1$ is a hereditary class, it holds $1 \le |S \cap V(M)| \le 2$.
	Hence, Lemma~\ref{lem:numberofleaves} implies that $H$ is the flower subgraph of $G$ with $|{\cal P}(H)| = |{\cal P}(G)|$ which maximizes the number of leaves of Type~$2$.
\end{proof}

We have the following complexity result as a consequence of the above theorem.

\begin{corollary}
	One can determine the monophonic rank of a graph $G \in \Gamma_1$ in polynomial time.
\end{corollary}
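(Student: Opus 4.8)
The plan is to convert the closed-form expression of Theorem~\ref{the:Gamma} into an algorithm by showing that each quantity it uses is computable in polynomial time. The first ingredient is the decomposition of $G$ into its maximal prime subgraphs: the family ${\cal M}(G)$, together with the clique separators along which these mp-subgraphs are glued, can be produced in polynomial time by the classical clique-separator (atom) decomposition, yielding only $O(n)$ mp-subgraphs recorded in a decomposition tree. From this data I can read off, for each $M \in {\cal M}(G)$, the sets $\borda{M}$ and $\semborda{M}$, test whether $M$ is a petal, and, if so, whether it is extreme.

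Next I would simplify the objective. Call a flower subgraph $G'$ \emph{admissible} if $|{\cal P}(G')| = |{\cal P}(G)|$. By the proof of Theorem~\ref{the:Gamma}, every petal of a graph in $\Gamma_1$ is of Type~1 or Type~2, so any admissible $G'$ satisfies $n_1 + n_2 = |{\cal P}(G')| = |{\cal P}(G)|$, whence $r(G) = |{\cal P}(G)| + n_2$. As $|{\cal P}(G)|$ is immediate from the decomposition, it remains to maximize the number $n_2$ of Type~2 petals over the admissible flower subgraphs. For a single petal $M$ with clique boundary $\borda{M}$, deciding its type is a local adjacency test: $M$ is of Type~2 precisely when $\semborda{M}$ contains two vertices each adjacent to all of $\borda{M}$, and otherwise any vertex of $\semborda{M}$ gives a Type~1 stamen set. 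Each such test runs in polynomial time.

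The main obstacle is that one cannot simply inspect $G$ itself nor enumerate flower subgraphs, since there may be exponentially many, and the data governing each petal are not fixed in advance. Indeed, deleting $\semborda{M}$ for an extreme petal $M$ may move vertices of $\borda{M''}$ into $\semborda{M''}$ for a neighbouring mp-subgraph $M''$, which can turn $M''$ into a new petal or flip its type between~1 and~2. I would control this interaction by working on the decomposition tree and reusing the case analysis from the proof of Lemma~\ref{lem:numberofleaves}: deleting a non-extreme petal strictly decreases $|{\cal P}|$, whereas deleting an extreme petal decreases it by one while possibly promoting its boundary clique to a new petal, so the petal count is preserved only along a restricted family of deletions. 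Encoding the admissible deletions and their effect on the type of each surviving petal as a dynamic program over the decomposition tree, I would compute the maximum attainable $n_2$.

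Finally, combining the polynomial cost of the decomposition, of the local type tests, and of the tree dynamic program, the value $r(G) = |{\cal P}(G)| + n_2$ is obtained in polynomial time, which proves the corollary. I expect the dynamic program of the third step to be the delicate part, since it must track simultaneously the number of petals that are retained and the number among them that are forced to be of Type~1.
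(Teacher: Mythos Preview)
Your reduction of the task to maximizing $n_2$ among flower subgraphs $G'$ with $|{\cal P}(G')|=|{\cal P}(G)|$, and hence to the identity $r(G)=|{\cal P}(G)|+n_2$, is correct and matches what Theorem~\ref{the:Gamma} really says. The preliminary remarks on the atom decomposition and on the local test for the type of a petal are also fine.

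The gap is your third step. You do not give the dynamic program; you only announce that one should exist and explicitly call it ``the delicate part''. As stated this is a promissory note, not an argument: you have not specified the states, the transitions, or why the recursion is sound. And the cascading you yourself describe---deleting an extreme petal may shrink $\borda{M''}$ for a neighbour $M''$, possibly creating a new extreme petal, and so on---means that the interactions are not obviously confined to a single tree edge, so the design of such a DP is genuinely non-trivial.

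The paper sidesteps all of this with a one-line greedy rule: set $G':=G$; while $G'$ has a Type-1 petal $B$ with $|{\cal P}(G'-\semborda{B})|=|{\cal P}(G')|$, replace $G'$ by $G'-\semborda{B}$. Each iteration is a polynomial-time test and strictly removes vertices, so termination and the polynomial bound are immediate; the terminal $G'$ is then claimed to be the admissible flower subgraph with the maximum number of Type-2 petals. The underlying observation you missed is that, under the constraint that $|{\cal P}|$ is preserved, deleting a Type-1 petal can never hurt: it does not destroy any Type-2 petal, and by shrinking boundaries it can only upgrade surviving petals from Type~1 to Type~2. Hence the greedy sequence already attains the optimum, and no tree DP is needed.
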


\begin{proof}
	Set $G'$ equals $G$. Now, while there is petal mp-subgraph $B$ of $G'$ having Type~1 such that $G'' = G' - \semborda{B}$ satisfies $|{\cal P}(G'')| = |{\cal P}(G')|$, set $G' = G' - \semborda{B}$ and repeat. The instance of $G'$ after the loop is the flower subgraph of $G$ with the same number of petal mp-subgraphs as $G$ with maximum number of leaves of Type~$2$. By Theorem~\ref{the:Gamma}, $r(G) = n_1 + 2n_2$ where $n_1$ is the number of petal mp-subgraphs of Type~$1$, $n_2$ is the number of petal mp-subgraphs of Type~$2$. It is clear that $G'$ can be obtained in polynomial time on the size of $G$.
\end{proof}

\section{When the intersection graph of the mp-subgraphs is a tree} \label{sec:inter}

In this section, we consider the class $\Gamma_2$ of the graphs $G$ such that $\Omega({\cal M}(G))$ is a tree. We will see that $\Gamma_2$ contains the line graphs. We begin characterizing the graphs of $\Gamma_2$. Then, we present in Section~\ref{sec:metaAlg} a meta-algorithm for trees. Such algorithm is part of the algorithm presented in the sequel for the computation of the monophonic rank of graphs of $\Gamma_2$. We decided to present the exploitation of the tree-like structure of our solution separatly because it is essentially the same that appears implicitly in other solutions for other problems~(\cite{ANAND2020,BENEVIDES201588}), which facilitates its use in future works.

\begin{lemma} \label{lem:intGraph}
	A graph $G \in \Gamma_2$ if and only if every vertex of $G$ belongs to at most $2$ mp-subgraphs of $G$.
\end{lemma}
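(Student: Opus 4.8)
The plan is to prove the biconditional Lemma~\ref{lem:intGraph} by establishing both directions via the structure of the intersection graph $\Omega({\cal M}(G))$, whose vertices are the mp-subgraphs of $G$ and whose edges join two mp-subgraphs that share at least one vertex. The key observation I would exploit is that the maximal prime subgraphs of a graph overlap precisely along clique separators, and that a vertex $v$ belonging to $t$ distinct mp-subgraphs $M_1,\dots,M_t$ forces those $t$ nodes to be mutually adjacent in $\Omega({\cal M}(G))$, since each pair $M_i,M_j$ shares $v$. Thus if some vertex lies in $3$ or more mp-subgraphs, the corresponding nodes induce a triangle (indeed a clique $K_t$ with $t\ge 3$) in $\Omega({\cal M}(G))$, which cannot be a tree.

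First I would prove the contrapositive of the forward direction ``$G\in\Gamma_2 \Rightarrow$ every vertex is in at most $2$ mp-subgraphs.'' Assume some vertex $v$ belongs to three distinct mp-subgraphs $M_1, M_2, M_3$. Each pairwise intersection $M_i\cap M_j$ contains $v$, so the three nodes $M_1,M_2,M_3$ are pairwise adjacent in $\Omega({\cal M}(G))$ and form a cycle of length $3$, so $\Omega({\cal M}(G))$ is not acyclic and hence not a tree, contradicting $G\in\Gamma_2$. This direction is short and essentially immediate once the adjacency-from-shared-vertex observation is in place.

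For the converse, assume every vertex of $G$ lies in at most $2$ mp-subgraphs; I must show $\Omega({\cal M}(G))$ is a tree, i.e.\ that it is connected and acyclic. Connectivity follows from the connectivity of $G$ together with the fact that the mp-subgraphs cover $V(G)$ and that adjacent mp-subgraphs correspond to a connected decomposition; I would argue that the clique-separator decomposition tree of $G$ already witnesses connectivity of $\Omega({\cal M}(G))$. The substantive part is acyclicity. The hypothesis that each vertex lies in at most two mp-subgraphs means each shared clique separator is ``used'' by exactly two mp-subgraphs, so every edge of $\Omega({\cal M}(G))$ is pinned to a distinct separating clique. I would suppose for contradiction that $\Omega({\cal M}(G))$ contains a cycle $M_{i_1}, M_{i_2},\dots, M_{i_\ell}, M_{i_1}$ with $\ell\ge 3$, pick for each consecutive pair a shared vertex (hence a shared clique separator), and derive that some mp-subgraph fails to be separated by a clique, or that the union of the cycle's mp-subgraphs is itself prime and thereby contradicts maximality of the $M_{i_j}$.

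The main obstacle I expect is the acyclicity argument in the converse: turning ``a cycle in $\Omega({\cal M}(G))$'' into a genuine contradiction with the defining property of maximal prime subgraphs. The delicate point is that two adjacent mp-subgraphs in the decomposition overlap in a \emph{clique} (the separator), and I must use the at-most-two hypothesis to guarantee that a cycle of mp-subgraphs cannot be ``merged'' through these cliques. The cleanest route is probably to invoke the standard clique-separator / atom decomposition of $G$, which is known to produce a tree structure when overlaps are controlled; the hypothesis precisely prevents a single vertex from creating the multiple-overlap configuration that would spoil the tree and create a cycle. I would formalize this by showing that a minimal cycle would force a vertex common to three of its mp-subgraphs or a clique separator shared among three, both forbidden by the hypothesis, thereby completing the proof.
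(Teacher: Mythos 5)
Your forward direction coincides with the paper's: a vertex in three mp-subgraphs makes the corresponding three nodes pairwise adjacent in $\Omega({\cal M}(G))$, producing a triangle, so the intersection graph is not a tree. The problem lies in your converse. You correctly name the target --- a cycle of mp-subgraphs should force their union to be prime, contradicting maximality --- but the concrete mechanism you propose, namely that ``a minimal cycle would force a vertex common to three of its mp-subgraphs or a clique separator shared among three, both forbidden by the hypothesis,'' fails for induced cycles of length at least $4$. In such a cycle the non-consecutive mp-subgraphs are disjoint, so no vertex need lie in three of them and no separator is shared among three; your argument then yields no contradiction, and in fact the hypothesis plays no role in this case. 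The paper disposes of it directly: for an induced cycle $M_1,\dots,M_k$ in $\Omega({\cal M}(G))$ with $k\ge 4$, deleting any pairwise intersection $M_i\cap M_{i+1}$ (indices mod $k$) leaves $M_1\cup\cdots\cup M_k$ connected via the other side of the cycle, so this union has no clique separator and properly contains an mp-subgraph, contradicting maximality --- unconditionally. The hypothesis that every vertex lies in at most two mp-subgraphs is needed only for the remaining case $k=3$, where it guarantees $(M_i\cap M_j)\cap(M_i\cap M_k)=\varnothing$, so that the same ``go around the other side'' argument applies to the triangle as well.

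So your sketch leaves the $k\ge 4$ case genuinely open and misattributes to the hypothesis the exclusion of long induced cycles, which is actually a hypothesis-free structural fact about mp-subgraph decompositions. A minor secondary point: you assert connectivity of $\Omega({\cal M}(G))$ without argument; the paper is equally silent on this, so it is not the substantive defect, but if you state it you should at least note it follows from the connectivity of $G$ and the fact that the mp-subgraphs cover $V(G)$.
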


\begin{proof}
	Denote $H = \Omega({\cal M}(G))$. First, consider that $v$ is a vertex belonging to 3 mp-subgraphs $M_1, M_2,$ and $M_3$ of $G$. It is clear that the vertices $u_1, u_2,$ and $u_3$ of $H$ corresponding to $M_1, M_2,$ and $M_3$, respectively, induce a $C_3$, which means that $H$ is not a tree.
	
	Conversely, consider that $H$ has a cycle $C = u_1 \ldots u_k$. We can assume that $C$ is induced. If $k \ge 4$, then $G - M_i \cap M_{i+1}$ is a connected graph for $i+1$ taken mod $k$. Therefore, $M_1 \cup \ldots \cup M_k$ has no clique separator and properly contains an mp-subgraph. Therefore, $k = 3$ and suppose that every vertex of $G$ belongs to at most $2$ mp-subgraphs of $G$. Hence. $(M_i \cap M_j) \cap (M_i \cap M_k) = \varnothing$ for $i,j,k$ being different values of $\{1,2,3\}$. Therefore, $G - M_i \cap M_{i+1}$ is a connected graph for $i+1$ taken mod $3$. Therefore, $M_1 \cup \ldots \cup M_3$ has no clique separator and properly contains an mp-subgraph, which is a contradiction.	
\end{proof}

\subsection{Meta-algorithm for trees} \label{sec:metaAlg}

Let $T$ be a tree of order $n$ and let $F$ and $F'$ be sets of functions. We say that $(T,F,F')$ is a \emph{good triple} if

\begin{itemize}
	\item $|F| = n$;
	
	\item $|F'| = 2n-2$;
	
	\item for every $v \in V(T)$ and $w \in N(v)$, there is $f'_{v,u} \in F'$ which depends of the values of $f'_{w,v}$ for $w \in N(v) - \{u\}$ and does not depend of any other function of $F \cup F'$; and
	
	\item for every $v \in V(T)$, there is $f_{v} \in F$ which depends of the values of $f'_{w,v}$ for $w \in N(v)$ and does not depend of any other function of $F \cup F'$.
\end{itemize}

If $v$ is a leaf of $T$, we denote the only neighbor of $v$ in $T$ by $\pi_T(v)$.

\begin{algorithm}[h] \label{alg:general}
	
	\caption{{\sc Meta\_Algorithm\_Tree}}
	
	\KwIn{A good triple $(T,F,F')$}
	\KwOut{Computation of $f$ and $f'$ for every $v \in V(T)$}
	
	\If{$V(T) = \{w\}$}{
		compute $f_w$
		
		\Return \label{lin:n1}
	}
	
	$T' \leftarrow T$
	
	\While{$|V(T')| \ge 3$}{ \label{lin:1WhileBegin}
		$w \leftarrow$ leaf of $T'$
		
		compute $f'_{w,\pi_{T'}(w)}$ \label{lin:f'A}
		
		remove $w$ of $T'$ \label{lin:1WhileEnd}
	}
	
	$w \leftarrow$ one of the two vertices of $T'$
	
	compute $f'_{w,\pi(w)}$ \label{lin:n2}
	
	compute $f'_{\pi(w),w}$
	
	\While{there is leaf $w$ of $T'$ with $f_w$ not computed yet}{ \label{lin:2WhileBegin}
		
		\For{$u \in N_T(w)$ such that $u \ne \pi_{T'}(w)$}{ \label{lin:ForBegin}
			compute $f'_{w,u}$ \label{lin:f'B}
			
			add $u$ to $T'$ as neighbor of $w$ \label{lin:ForEnd}
		}
		compute $f_w$ \label{lin:f}
	}
\end{algorithm}

\begin{theorem} \label{the:metaalg}
	If $(T,F,F')$ is a good triple, Algorithm~$\ref{alg:general}$ computes all functions of $F$ and $F'$ in time $O( n (\alpha + \alpha'))$ steps, where $\alpha$ and $\alpha'$ are the time complexities of computing $f \in F$ and $f' \in F'$, respectively, and $n$ is the number of vertices of the input tree.
\end{theorem}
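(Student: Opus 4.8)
The plan is to read Algorithm~\ref{alg:general} as a two-pass message-passing procedure on $T$ and to establish, in this order, that it is \emph{correct} (each function is evaluated only after every function it depends on has been evaluated), that it is \emph{complete} (each of the $n$ functions of $F$ and the $2n-2$ functions of $F'$ is evaluated exactly once), and that it runs within the claimed bound.

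For correctness I would first fix an orientation of $T$ by treating the two vertices that survive the first while loop (lines~\ref{lin:1WhileBegin}--\ref{lin:1WhileEnd}) as a ``root edge'': then every other vertex $v$ has a well-defined \emph{parent}, the neighbor of $v$ on the path toward that edge, and a set of \emph{children}. The key is an invariant identifying $\pi_{T'}(\cdot)$ with this parent. When a leaf $w$ is deleted in the first loop, every neighbor of $w$ in $T$ other than $\pi_{T'}(w)$ has already been deleted, so each message $f'_{z,w}$ with $z \in N(w)-\{\pi_{T'}(w)\}$ is already available; by the defining dependency of $F'$ this is exactly what line~\ref{lin:f'A} needs to produce $f'_{w,\pi_{T'}(w)}$. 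Iterating over the $n-2$ deletions shows that all \emph{inward} messages are computed in a valid order, after which line~\ref{lin:n2} and the line following it supply the two messages on the root edge, each of the two required inward sides being by then available. For the second loop I would prove the mirror-image invariant: a non-root vertex $w$ is reinserted into $T'$ only during the processing of its parent, so when $w$ is later chosen as a leaf the \emph{outward} message $f'_{\pi_{T'}(w),w}$ from its parent is already known; together with the inward messages $f'_{z,w}$ from its children, computed in the first loop, this is precisely what the outward messages of line~\ref{lin:f'B} and the value $f_w$ of line~\ref{lin:f} depend on. Hence every evaluation respects its dependencies.

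Completeness I would settle by a counting argument along the same orientation. The first loop performs $n-2$ message evaluations, line~\ref{lin:n2} and its successor add the $2$ root-edge messages, and the second loop, which processes each vertex exactly once, evaluates the outward message of every parent--child edge, i.e. $n-2$ further messages; since each of the $n-2$ non-root edges carries exactly one inward and one outward message and the root edge carries both of its messages, this accounts for all $2(n-1)=2n-2$ functions of $F'$, each once. Symmetrically, $f_v$ is produced once for every vertex (in the base case when $n=1$, and otherwise in line~\ref{lin:f}), covering all $n$ functions of $F$.

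For the running time I would charge work to evaluations and to tree bookkeeping separately. The $2n-2$ evaluations of $F'$-functions cost $O(n\alpha')$ and the $n$ evaluations of $F$-functions cost $O(n\alpha)$. Maintaining $T'$---selecting a leaf, deleting a vertex, and reinserting a vertex as a neighbor---can be done in $O(1)$ amortized time per operation using a queue of current leaves together with degree counters, contributing $O(n)$ in total. Summing gives $O(n\alpha+n\alpha'+n)=O(n(\alpha+\alpha'))$. I expect the main obstacle to be the bookkeeping of the second loop rather than any estimate: one must check that $\pi_{T'}(w)$ during rebuilding really is the parent of $w$, so that the neighbors whose messages are needed coincide exactly with those already computed, and that no vertex is processed before its parent. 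Both facts follow from the reinsertion discipline of the loop, and it is this discipline---not any property of the individual $f$ and $f'$---that carries the proof.
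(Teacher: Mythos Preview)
Your proposal is correct and follows essentially the same argument as the paper: both establish that when a function in $F'$ (respectively $F$) is evaluated, all of its prerequisite $f'$-values are already available, and both obtain the time bound by counting $n-2$ iterations of the first loop, $2$ root-edge evaluations, $n$ iterations of the second loop, and $n-2$ total iterations of the inner \textbf{for}. Your framing in terms of a root edge, inward/outward messages, and an explicit completeness count is more detailed than the paper's terse version (which simply notes that $w$ is a leaf of the current $T'$ at each evaluation point, hence the required $f'_{u,w}$ are ready), but the underlying reasoning is the same.
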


\begin{proof}
	The number of iterations of the {\bfseries while} beginning in line~\ref{lin:1WhileBegin} is $n - 2$. The cost of each iteration is $\alpha'$. Therefore, the total cost of lines~\ref{lin:1WhileBegin} to~\ref{lin:1WhileEnd} is $O(n \alpha')$.
	It remains to show that $f'_{w,\pi_{T'}(w)}$ can be computed in line~\ref{lin:f'A}. But this is consequence of the fact that $f'_{w,\pi_{T'}(w)}$ is computed only if $w$ is a leaf of $T'$ at this moment, \textit{i.e.}, $f'_{u,w}$ has already been computed for every $u \in N_T(w) - \{\pi_{T'}(w)\}$.
	
	The number of iterations of the {\bfseries while} loop beginning in line~\ref{lin:2WhileBegin} is $n$. The cost of each line~\ref{lin:f} is $\alpha$.
	Even inside of the {\bfseries while} loop beginning in line~\ref{lin:2WhileBegin}, the total number of iterations of the {\bfseries for} loop beginning in line~\ref{lin:ForBegin} is $n - 2$. The cost of each iteration is $\alpha'$.
	Therefore, the total cost of lines~\ref{lin:2WhileBegin} to~\ref{lin:f} is $O(n (\alpha + \alpha'))$.
	It remains to show that $f'_{w,\pi_{T'}(w)}$ can be computed in line~\ref{lin:f'B} and that $f_w$ can be computed in line~\ref{lin:f}. Both cases are consequences of the fact that $f'_{u,w}$ has already been computed for every $u \in N_T(w)$ when $w$ is chosen in line~\ref{lin:2WhileBegin}.
\end{proof}

\subsection{Applying the meta-algorithm}

Now, we define the functions of a good triple of a graph $G \in \Gamma_2$ for expressing the monophonic rank of $G$.
Write $T = \Omega({\cal M}(G))$. For every $w \in V(T)$ and $u \in N(w)$, denote by $M_w$ the mp-subgraph of $G$ corresponding to $w$ and by $M_{w,u}$ the petal mp-subgraph of the graph associated to the tree of $T - (N(w) - u)$ containing $w$. Define

\begin{equation} \label{equ:1int}
f'_{w,u} = \max \left\{ s(M_{w,u}), \underset{v \in N(w) - \{u\}}{\sum} f'_{v,w} \right\}
\end{equation}

\begin{equation} \label{equ:2int}
f_w = \max \left\{ \omega(M_w) , \underset{v \in N(w)}{\sum} f'_{v,w} \right\}
\end{equation}

\begin{theorem} \label{the:inter}
	If $G \in \Gamma_2$, then $r(G) = \underset{w \in V(\Omega({\cal M}(G)))}{\max} \{f_w\}$.
\end{theorem}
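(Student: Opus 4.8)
The plan is to derive the identity from Corollary~\ref{cor:mrank}, which already reduces $r(G)$ to $\max_{G' \in {\cal F}(G)} \sum_{M \in {\cal P}(G')} s(M)$, by showing that the two families of functions in~\eqref{equ:1int} and~\eqref{equ:2int} evaluate this maximum through a dynamic program on the tree $T = \Omega({\cal M}(G))$. First I would record the structural dictionary supplied by Lemma~\ref{lem:intGraph}: since every vertex of $G$ lies in at most two mp-subgraphs, the clique separators $S_{wu} = V(M_w) \cap V(M_u)$ sharing a common $M_w$ are pairwise disjoint, the flower subgraphs obtained by successively peeling leaf petals are in bijection with the connected subtrees $T'$ of $T$, and the petals of such a $G'$ that can actually carry vertices of an m-convexly independent set are exactly the leaves of $T'$ (a single retained node corresponds to a standalone prime graph). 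This is why $s(M_{w,u})$, the stamen size of $M_w$ when it is made a leaf whose border is the single clique $S_{wu}$, and $\omega(M_w)$, the stamen size of $M_w$ standing alone, are the only local terms that appear in the recursion.

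Next I would fix the semantics of the messages and prove them by induction on $|T_w^u|$, where $T_w^u$ is the component of $T - u$ containing $w$. The claim is that $f'_{w,u}$ equals the largest value of $\sum_{M \in {\cal P}(G')} s(M)$ over flower subgraphs $G'$ of the subgraph associated with $T_w^u$, with $M_w$ retaining $S_{wu}$ as part of its border; the maximum in~\eqref{equ:1int} then records the only two essentially different possibilities, namely making $M_w$ a leaf toward $u$, which contributes $s(M_{w,u})$ and forces every branch at $w$ to be pruned, versus keeping $M_w$ internal, in which case $M_w$ carries no vertex of $S$ and each child subtree contributes its own optimum $f'_{v,w}$. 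The base case is a leaf $w$ of $T_w^u$, where the empty sum gives $f'_{w,u} = s(M_{w,u})$; the inductive step uses that every $f'_{v,w} \ge 1$, so retaining all children dominates retaining a proper subset. Equation~\eqref{equ:2int} is the same dichotomy at the root: either $M_w$ is kept alone, giving the standalone value $\omega(M_w)$, or $w$ is internal and aggregates all incident messages. Reading $r(G) = \max_w f_w$ off these semantics is then immediate, and the matching lower bound is obtained by exhibiting, for the maximising vertex, an explicit m-convexly independent set supported on the leaf stamens via Theorem~\ref{the:charac}.

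The delicate point, and the step I expect to be the main obstacle, is reconciling the sum over \emph{all} petals in Corollary~\ref{cor:mrank} with the fact that the recursion only ever credits leaf stamens together with the single term $\omega(M_w)$. A flower subgraph may contain an internal mp-subgraph $M$ whose border $\borda{M}$ is a clique, hence a petal with positive $s(M)$, yet Theorem~\ref{the:charac} forbids any vertex of $S$ inside an internal $M$; the argument must therefore show that such an internal-petal configuration is always dominated by a collapsed one, in which the offending branches are peeled so that $M$ either becomes a leaf with border exactly one $S_{wu}$ or shrinks to a standalone prime graph accounted for by $\omega(M_w)$. Making this domination precise, so that no flower subgraph contributes more than $\max_w f_w$ while some admissible $S$ attains it, is the crux; once it is in place the upper and lower bounds meet. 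Finally, since $f'_{w,u}$ depends only on the messages $f'_{v,w}$ with $v \in N(w) - \{u\}$ and $f_w$ only on the messages $f'_{v,w}$ with $v \in N(w)$, the triple $(T,F,F')$ is good, so Theorem~\ref{the:metaalg} guarantees that all these values, and hence $r(G)$, are produced correctly in the stated order.
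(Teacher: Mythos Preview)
Your plan follows the same route as the paper: invoke Corollary~\ref{cor:mrank} and read the optimum off a message-passing recursion on $T=\Omega(\mathcal{M}(G))$. You are also right to single out the internal-petal issue as the crux; the paper's proof glosses over exactly this point when it writes ``$f'_{M,\pi_{T_r}(M)}=s(M)$ for every $M\in\mathcal{P}(G_r)$'', a statement that only parses when every petal of $G_r$ is a leaf of $T_r$.

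The gap, however, is not only where you locate it, and the domination argument you sketch does not close it. The recursion~\eqref{equ:2int} gives $M_w$ exactly two roles: standalone (worth $\omega(M_w)$) or internal (worth $0$); it never credits $M_w$ as a \emph{leaf} petal. Consequently, the optimal flower subgraph $G'$ whose tree has two nodes, both leaves, contributes $s(M_{w,u})+s(M_{u,w})$ to Corollary~\ref{cor:mrank} but is not realised by any $f_w$. A concrete witness: let $M_1=\{a,x\}$, $M_3=\{b,z\}$ be edges and $M_2$ the $4$-cycle $a\,b\,y\,c$, so $T$ is the path $M_1\text{--}M_2\text{--}M_3$ and $G\in\Gamma_2$. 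Here $r(G)=3$ (for instance $\{x,y,z\}$ is m-convexly independent, and Corollary~\ref{cor:mrank} attains $3$ on $G$ itself via the internal petal $M_2$, or on $G-\{z\}$ via two leaf petals), yet $\omega(M_i)=2$ for all $i$, $f'_{M_1,M_2}=f'_{M_3,M_2}=1$, $s(M_{2,1})=s(M_{2,3})=2$, and one checks $\max_w f_w=2$. So the stated identity fails on this example, and no domination argument of the kind you propose can rescue it: collapsing branches to make $M_2$ a leaf produces a two-node tree with no internal root for~\eqref{equ:2int} to sit at. Either~\eqref{equ:2int} needs an extra option crediting $M_w$ as a leaf toward some neighbour, or the semantics of $f'_{w,u}$ must already absorb that contribution; as written, neither your plan nor the paper's argument accounts for it.
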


\begin{proof}
	Write $T = \Omega({\cal M}(G))$. By Corollary~\ref{cor:mrank}, $r(G) = \underset{G' \in {\cal F}(G)}{\max} \left\{\underset{M \in {\cal P}(G')}{\sum} s(M) \right\}$ where $s(M)$ stands for the maximum size of a stamen set of $M$ in $G'$. Let $G_r \in {\cal F}(G)$ such that $r(G) = \underset{M \in {\cal P}(G_r)}{\sum} s(M)$. Note that $f'_{M,\pi_{T_r}(M)} = s(M)$ for every $M \in {\cal P}(G_r)$. For every $w$, since $f_w$ is defined as the maximum between two values where one of them is the sum of $f'_{u,w}$ for all $u \in N(w)$, for every $w$ that is not a leaf of $T_r$, it holds that $f_w = \underset{M \in {\cal P}(G_r)}{\sum} s(M)$. If $G_r$ is not a prime graph, then $T_r$ has at least one non-leaf vertex. Otherwise, $f_w$ is the maximum clique of the graph prime $G_r$, which corresponds to the maximum degree of a vertex of $G$.
\end{proof}

The following result is a consequence of Theorems~\ref{the:metaalg} and~\ref{the:inter}.

\begin{corollary} \label{cor:gamma2}
	If $G \in \Gamma_2$, then $r(G)$ can be computed in $O(n (m + \alpha + \alpha'))$ steps where $\alpha$ and $\alpha'$ are the time complexities of determining the clique number of $G$ and a maximum stamen set of a petal mp-subgraph of a flower subgraph of $G$, respectively.
\end{corollary}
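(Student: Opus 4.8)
The plan is to instantiate the meta-algorithm of Theorem~\ref{the:metaalg} on the tree $T = \Omega({\cal M}(G))$, equipped with the functions defined in~\eqref{equ:1int} and~\eqref{equ:2int}, and then translate its running time into the bound claimed here.

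First I would verify that $(T,F,F')$ is a good triple, where $F = \{f_w : w \in V(T)\}$ and $F' = \{f'_{w,u} : w \in V(T),\ u \in N(w)\}$. Writing $N = |V(T)|$, we have $|F| = N$, and since $T$ is a tree with $N-1$ edges and each edge contributes the two directed functions $f'_{w,u}$ and $f'_{u,w}$, we get $|F'| = 2N-2$, matching the cardinality requirements. The dependency conditions are read off directly from the defining equations: by~\eqref{equ:1int} the value $f'_{w,u}$ depends only on $f'_{v,w}$ for $v \in N(w) - \{u\}$, and by~\eqref{equ:2int} the value $f_w$ depends only on $f'_{v,w}$ for $v \in N(w)$. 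Hence $(T,F,F')$ is a good triple and the meta-algorithm applies.

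Next I would bound the cost of a single function evaluation. Evaluating $f'_{w,u}$ consists of (i) forming the petal mp-subgraph $M_{w,u}$ and computing $s(M_{w,u})$, at cost $O(m) + \alpha'$, and (ii) summing the already-known values $f'_{v,w}$ over $v \in N(w) - \{u\}$, at cost $O(\deg_T(w))$. Similarly, evaluating $f_w$ costs $O(m) + \alpha$ to build $M_w$ and compute $\omega(M_w)$, plus $O(\deg_T(w))$ for the summation. Since we may assume $G$ connected, $\deg_T(w) \le N - 1 \le n - 1 \le m$, so each $f'$-evaluation runs in $O(m + \alpha')$ and each $f$-evaluation in $O(m + \alpha)$. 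Feeding these per-function bounds into Theorem~\ref{the:metaalg} yields a total running time of $O\big(N(m + \alpha + \alpha')\big)$ for computing all of $F$ and $F'$.

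Finally, I would account for the preprocessing and the last step. The number of maximal prime subgraphs satisfies $N \le n$, and $\Omega({\cal M}(G))$ can be built from the clique-separator decomposition in polynomial time, a cost subsumed by $O(n(m + \alpha + \alpha'))$. By Theorem~\ref{the:inter}, once all $f_w$ are known we obtain $r(G) = \max_{w} f_w$ with a further $O(N) = O(n)$ comparisons. Using $N \le n$ throughout then gives the claimed $O(n(m + \alpha + \alpha'))$. The main obstacle I anticipate is purely the bookkeeping needed to reconcile the two roles of $\alpha$ and $\alpha'$: in Theorem~\ref{the:metaalg} they denote the cost of one $f$- or $f'$-evaluation, whereas in this statement they denote only the clique-number and stamen-set subroutines, so one must carefully absorb the $O(m)$ subgraph-construction cost and the $O(\deg_T(w))$ summation into the per-function bound.
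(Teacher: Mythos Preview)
Your proposal is correct and follows essentially the same route as the paper: verify that $(T,F,F')$ is a good triple, invoke Theorem~\ref{the:metaalg}, and then add the preprocessing cost of building ${\cal M}(G)$. The paper is terser but supplies one detail you leave implicit, namely Leimer's $O(nm)$ bound for computing ${\cal M}(G)$, which is exactly what justifies your claim that the decomposition cost is ``subsumed by $O(n(m+\alpha+\alpha'))$''.
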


\begin{proof}
	Denote $T = \Omega({\cal M}(G))$. It is easy to see that $(T, F, F')$ is a good triple.
	Using Theorem~\ref{the:inter}, it remains to prove the time complexity.
	By Theorem~\ref{the:metaalg}, $f'_{w,u}$ and $f_w$ can be computed in $O(n (\alpha + \alpha'))$ steps.
	Since ${\cal M}(G)$ can be computed in $O(nm)$ steps~(\cite{Leimer1993}), the result does follow.
\end{proof}

We say that a graph class $\Gamma$ is \emph{hereditary} if $G \in \Gamma$ implies that every induced subgraph of $G$ also belongs to $\Gamma$. As a consequence of Corollary~\ref{cor:gamma2}, the monophonic rank can be computed in polynomial time for every graph $G \in \Gamma_2 \cap \Gamma$ such that $\Gamma$ is hereditary and the clique number can be computed in polynomial time for the graphs of $\Gamma$, for instance, when $\Gamma$ is the class of chordal graphs. We will see in Section~\ref{sec:starlike} that {\sc Monophonic rank} is \NP-complete for chordal graphs.

We conclude this section by showing that there are non-hereditary graph classes $\Gamma$ so that {\sc Monophonic rank} is \NP-complete even if the clique number can be computed in polynomial time.

We define $\Gamma_3$ as the class containing the graphs $G$ such that $|V(G)| = 2n+1$ for some integer $n \ge 1$, $V(G)$ can be partitioned into sets $(V_1,V_2,V_3)$ where $|V_1| = n$, $V_1$ induces a clique, $V_2$ induces a connected graph with no universal vertices, $V_3 = \{u\}$ such that $N(u) = V_2$, and for every $v \in V_1$, there is $w$ such that $N(v) \cap V_2 = \{w\}$ and $N(w) \cap V_1 = \{v\}$.

It is easy to see that every graph of $\Gamma_3$ is prime, can be recognized in polynomial time, and its maximum clique can also be found in polynomial time.

\begin{theorem}
	{\sc Monophonic rank} is \NP-complete for the graphs whose mp-subgraphs belong to $\Gamma_3$.
\end{theorem}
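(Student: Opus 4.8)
The plan is to establish $\NP$-completeness by first observing membership in $\NP$, which is immediate: a monophonic convexly independent set is a polynomial-size certificate, and we can verify independence by checking, for each $v \in S$, that $v \notin \langle S - \{v\} \rangle$, which is computable in polynomial time since the monophonic convex hull is polynomially computable. The real work is the hardness reduction, and the shape of the definition of $\Gamma_3$ strongly suggests a reduction from a graph problem on the ``core'' graph induced by $V_2$. The structure is telling: $V_1$ is a clique of $n$ private pendants, each $v \in V_1$ attached to a distinct $w \in V_2$ with $N(v) \cap V_2 = \{w\}$ and $N(w) \cap V_1 = \{v\}$, and $u$ is adjacent to all of $V_2$. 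The plan is to reduce from a classical $\NP$-complete problem on the graph $G[V_2]$ — most plausibly \textsc{Independent Set} or its complement \textsc{Clique}, since the condition ``$V_2$ induces a connected graph with no universal vertices'' reads like the standard preprocessing one does when gadget-izing an arbitrary graph.

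Concretely, I would take an instance $(Q,k)$ of \textsc{Independent Set}, where $Q$ is the graph on which we want a large independent set, and build $G \in \Gamma_3$ whose $V_2$ part realizes $Q$ (adding a cheap fix, such as subdividing or adjoining a vertex, to kill universal vertices and guarantee connectivity without changing the independence number in a controlled way). Since every graph in $\Gamma_3$ is prime, Theorem~\ref{the:charac} collapses dramatically: the only flower subgraph is $G$ itself, $G$ is its own unique mp-subgraph, and an m-convexly independent set $S$ must be contained in a single stamen set of $G$. Thus the monophonic rank of a prime graph equals the maximum size of a stamen set, and the key computation is to understand which subsets of $\semborda{M}$ are stamen sets of Type~1, 2, or 3. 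The plan is to show that a Type~3 stamen set (a clique together with a vertex whose neighborhood contains $\borda{M}$) inside this gadget corresponds exactly to choosing a clique in $G[V_2]$ plus its associated private pendants in $V_1$, so that maximizing the stamen set size forces us to find a maximum clique in $G[V_2]$, equivalently a maximum independent set in its complement. I would calibrate the gadget so that $r(G) = k + (\text{constant})$ iff $Q$ has an independent set of size $k$.

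The main obstacle, and where I would concentrate the care, is the combinatorial analysis of the stamen sets of the gadget $G$ and proving that the maximum stamen set is achieved only by the ``clique plus pendants'' configuration. One must rule out that mixing vertices from $V_1$, $V_2$, and $u$ in the three different stamen types yields a larger set that does not correspond to a clique in $G[V_2]$; in particular the interplay between Type~2 (two vertices each dominating $\borda{M}$) and Type~3 needs a tight argument, and one must verify that $\borda{M}$, $\semborda{M}$, and the private-pendant condition interact as intended so that adding a pendant $v \in V_1$ to the selected set is ``free'' precisely when its partner $w \in V_2$ is chosen. Establishing the forward and backward directions of the size inequality — a clique of size $t$ in $G[V_2]$ producing a stamen set of the target size, and conversely any stamen set of that size forcing a clique of that size — is the crux; the rest (polynomial-time constructibility of $G$ and the verification $G \in \Gamma_3$) is routine given that $\Gamma_3$-membership, recognition, and clique number are already noted to be polynomial.
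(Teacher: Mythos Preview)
Your plan has a genuine gap: a reduction that outputs a \emph{single} prime graph $G\in\Gamma_3$ cannot work. When $G$ is prime it is its own unique mp-subgraph $M$, so $\borda{M}=\varnothing$; the Type~3 condition ``$X\cup\borda{M}$ is a clique'' then degenerates to ``$X$ is a clique'', and hence $r(G)=\omega(G)$ by Corollary~\ref{cor:mrank}. But the structure of $\Gamma_3$ pins down $\omega(G)=|V_1|=n$ regardless of $G[V_2]$: a clique meeting $V_1$ in at least two vertices must lie entirely in $V_1$ (distinct vertices of $V_1$ have disjoint neighbourhoods in $V_2$ and miss $u$), while a clique disjoint from $V_1$ has size at most $1+\omega(G[V_2])\le n$ because $G[V_2]$ has no universal vertex. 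So $r(G)=n$ is trivially computable and carries no information about cliques or independent sets in $G[V_2]$. (Incidentally, your description of a Type~3 stamen set as ``a clique together with a vertex whose neighbourhood contains $\borda{M}$'' is not what the definition says, and adjoining private pendants from $V_1$ to a clique in $V_2$ never yields a clique, since each pendant sees only its own partner.)

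The missing idea is that $\borda{M}$ must be nonempty for the stamen constraint to encode anything nontrivial, and this requires at least two mp-subgraphs. The paper achieves this by taking two copies of the $\Gamma_3$ gadget over the same input graph and identifying their apex vertices $u$ into a single vertex $w$. The resulting graph $H$ has exactly two mp-subgraphs, each in $\Gamma_3$, with $\borda{M}=\{w\}$; now a Type~3 stamen set in $M$ must be a clique inside $N(w)\cap\semborda{M}$, which is precisely the copy of $V_2$ (respectively $V_4$). One then gets $r(H)=\max\{n,\ \omega(H[V_2])+\omega(H[V_4])\}=\max\{n,2\omega(G)\}$, and a reduction from the threshold version of {\sc Maxclique} asking whether $\omega(G)\ge\lceil n/2\rceil$ finishes the argument.
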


\begin{proof}
	We present a reduction from a version of {\sc Maxclique} that receives as input a connected graph $G$ or order $n$, with no universal vertices, and asks whether there exists a clique in $G$ with at least $\lceil\frac{n}{2} \rceil$ vertices. Let $H$ be the graph with vertex set $(V_1,V_2,V_3,V_4,\{w\})$, such that $H[V_2] \simeq H[V_4] \simeq G$, $V_1$ and $V_3$ are cliques, $N(w) = V_2 \cup V_4$, for every $v \in V_1$, there is $w$ such that $N(v) \cap V_2 = \{w\}$ and $N(w) \cap V_1 = \{v\}$, and the same for $V_3$ and $V_4$. Notice that $H$ contains two mp-subgraphs, each one belonging to $\Gamma_3$. It is easy to see that $rk(H) = \max \{n, \omega(H[V_2]) + \omega(H[V_4])\}$.
\end{proof}

\subsection{Line graphs} \label{sec:line}

Let $G$ be a graph and $v \in V(G)$. The line graph of $G$, denoted by $L(G)$, is $\Omega(E(G))$. We say that $G$ is a \emph{line graph} if there is a graph $H$ such that $G = L(H)$. In this case, $L^{-1}(G)$ represents the family of graphs that are a root of $G$ under the line graph operator.
Denote by $E(v)$ the set of edges of $G$ incident to $v$.
If $v$ is a cut vertex of $G$ and $H$ is a connected component of $G - v$, then the subgraph of $G$ induced by $V(H) \cup \{v\}$ is a \emph{$v$-component}.

\begin{lemma} \label{lem:2Cprime}
	If $G$ is a $2$-connected graph, then $L(G)$ is prime.
\end{lemma}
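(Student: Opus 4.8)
The plan is to show that a clique separator in $L(G)$ would force a cut vertex in $G$, contradicting $2$-connectedness. Recall that vertices of $L(G)$ are the edges of $G$, and two are adjacent in $L(G)$ precisely when they share an endpoint in $G$. A clique in $L(G)$ therefore corresponds either to a set of edges all sharing a common vertex (a ``star'') or to the three edges of a triangle in $G$. The key structural fact I would establish first is this translation: a clique separator $C$ of $L(G)$ is a set of edges of $G$, and I want to argue that all edges of $C$ must pass through a single vertex $v$ of $G$.

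First I would suppose, for contradiction, that $L(G)$ is not prime, so there is a clique separator $C \subseteq E(G)$ whose removal disconnects $L(G)$ into at least two components, say with $e_1$ and $e_2$ lying in different components. The plan is to split into the two cases for the clique $C$. In the triangle case, $C = \{xy, yz, zx\}$ for some triangle in $G$; here I would show that deleting these three edge-vertices cannot separate $L(G)$ unless $G$ itself is essentially that triangle, because any other edge of $G$ attaches to one of $x,y,z$ and the triangle keeps the incidences connected, so this case collapses. The main work is the star case: $C$ is a set of edges all incident to a common vertex $v$. I would argue that $C$ must then be a subset (in fact a proper subset, since removing it disconnects things) of $E(v)$, and that the edges $e_1, e_2$ separated by $C$ are themselves incident to vertices other than $v$ on at least one endpoint.

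Next I would translate the disconnection of $L(G) - C$ back to $G$. If $e_1 = ab$ and $e_2 = cd$ lie in different components of $L(G) - C$, then in $G$ there is no path from $e_1$ to $e_2$ that avoids using an edge of $C$ as an intermediate vertex of the line graph, i.e., no walk of edges where consecutive edges share endpoints and no interior edge lies in $C$. The claim I want is that this forces $v$ to be a cut vertex of $G$: every path in $G$ between an endpoint of $e_1$ and an endpoint of $e_2$ that avoids $v$ would yield a path in $L(G) - C$ connecting $e_1$ to $e_2$, since such a $G$-path uses only edges not in $E(v) \supseteq C$ (as none of its vertices is $v$). Hence the non-existence of the line-graph path means every $G$-path between the relevant endpoints must pass through $v$, so deleting $v$ disconnects $G$. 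This contradicts $2$-connectedness, completing the argument.

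The step I expect to be the main obstacle is handling the boundary cases cleanly, in particular making the correspondence ``$C$ separates $L(G)$ $\Longrightarrow$ $v$ separates $G$'' airtight, since one must be careful about edges of $C$ that share the vertex $v$ but whose \emph{other} endpoints could still route connectivity, and about degenerate configurations such as $v$ having small degree or $G$ being a short cycle. I would also need to dispose of the triangle-clique case and of the possibility that $C$ is a star not saturating $E(v)$ without slipping into a false claim; the cleanest route may be to invoke Whitney's theorem (that for $2$-connected graphs other than small exceptions the line-graph root is essentially unique) only if needed, but I expect the direct incidence argument above to suffice once the separation translation is stated precisely.
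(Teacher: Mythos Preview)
Your overall strategy---translate a hypothetical clique separator of $L(G)$ back to $G$ and produce a cut vertex---is sound and genuinely different from the paper's argument, but the triangle case as you sketch it contains a real error. The sentence ``any other edge of $G$ attaches to one of $x,y,z$'' is simply false: a $2$-connected graph can of course contain edges disjoint from a given triangle. What \emph{is} true, and what you would actually need, is that if deleting the three triangle-edges from $G$ leaves the edge set disconnected, then one of $x,y,z$ is a cut vertex of $G$ (or $G$ is the triangle itself): any component of $G-\{xy,yz,zx\}$ not containing all of $x,y,z$ is attached to the rest of $G$ only through whichever of $x,y,z$ it does contain. So the triangle case can be absorbed into the same cut-vertex mechanism as your star case, but not for the reason you give. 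Your star case is correct in spirit; the boundary issues you flag (e.g.\ $e_1$ itself incident to $v$) are handled by taking the path from the \emph{other} endpoint of $e_1$, and the case $e_1,e_2$ both incident to $v$ cannot occur since they would then be adjacent in $L(G)$.

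For comparison, the paper argues more uniformly and avoids the star/triangle split on the $G$ side. Given edges $u,v$ of $G$ separated by the clique $K$ in $L(G)$, it uses $2$-connectedness to find a cycle of $G$ through both $u$ and $v$; the two arcs of that cycle give two $u$--$v$ walks in $L(G)$. Whether $K$ corresponds to a star at some $w$ or to a triangle, the edges of $K$ that lie on the cycle are at most two and are consecutive along it (they meet at $w$, respectively at a triangle vertex), hence lie entirely in one arc; the other arc is then a $u$--$v$ path in $L(G)-K$, a contradiction. This buys a single short paragraph with no case bookkeeping about where $e_1,e_2$ sit relative to $v$; your approach, once the triangle case is repaired as above, is arguably more conceptual (clique separator of $L(G)$ $\Rightarrow$ cut vertex of $G$) and does not explicitly invoke the cycle-through-two-given-edges property.
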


\begin{proof}
	Suppose by contradiction that $K$ is a clique separating vertices $u,v \in L(G)$.
	We can write $u = u_1u_2$ and $v = v_1v_2$ for $u_1,u_2,v_1,v_2 \in V(G)$.
	Since $K$ separates $u$ and $v$, we have that $u_1,u_2,v_1$ and $v_2$ are distinct.
	Since $G$ is 2-connected, there is a cycle $C$ containing both edges $u_1u_2$ and $v_1v_2$.
	Without loss of generality, we can assume that $u_1$ and $v_1$ separate $u_2$ and $v_2$ in the subgraph of $G$ induced by the edges of $C$.
	Then, 
	let $P_{u_1u_2v_1v_2}$ be the $(u_1,v_2)$-path formed by the edges of $C$ containing the edges $u_1u_2,v_1v_2$, and let $P_{u_2u_1v_2v_1}$ be the $(u_2,v_1)$-path formed by the edges of $C$ containing the edges $u_1u_2,v_1v_2$.
	Now, denote by $E'$ the edges of $G$ associated with the vertices of $K$.
	There are two possibilities for the elements of $E'$,
	either they are the edges of a triangle of $G - \{u_1u_2,v_1v_2\}$
	or they are some edges of $E(G) - \{u_1u_2,v_1v_2\}$ incident to a vertex $w$.
	In both cases, at least one of the paths $P_{u_1u_2v_1v_2}$ and $P_{u_2u_1v_2v_1}$ does not contain edges of $E'$, which is a contradiction.
\end{proof}

Define ${\cal L}(G) = \{L(B) : B$ is a block of $G\} \cup \{L(E(v)) : v$ is a cut vertex of $G\}$.

\begin{lemma} \label{lem:mp-subgraphs}
	Let $G$ be a graph. Then, the family of mp-subgraphs of $L(G)$ is ${\cal L}(G)$. Furthermore, if $|{\cal L}(G)| \ge 2$ and $B$ is a block of $G$ with at least $3$ vertices, then $|L(\borda{B})| \ge 2$.
\end{lemma}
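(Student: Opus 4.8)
===  BEGIN PLAN  ===

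\textbf{Structure of the argument.} The plan is to prove the two claims of Lemma~\ref{lem:mp-subgraphs} in sequence, using Lemma~\ref{lem:2Cprime} as the engine for the first. For the identification of the mp-subgraphs of $L(G)$ with ${\cal L}(G)$, I would first show that each member of ${\cal L}(G)$ is indeed a prime induced subgraph of $L(G)$, and then show maximality and that nothing else qualifies. Recall that in $L(G)$ the vertices are edges of $G$, and two vertices of $L(G)$ are adjacent iff the corresponding edges of $G$ share an endpoint. Under this correspondence, the edges of a block $B$ of $G$ induce $L(B)$ in $L(G)$, and the edges incident to a cut vertex $v$ induce $L(E(v))$, which is a clique (all such edges share $v$) and hence trivially prime. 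For a block $B$ that is $2$-connected, Lemma~\ref{lem:2Cprime} gives that $L(B)$ is prime; the remaining block types (a single edge, or a block equal to $K_2$) give $L(B)$ a single vertex, also prime.

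\textbf{The two directions.} First I would argue that each $H \in {\cal L}(G)$ is a \emph{maximal} prime induced subgraph. The natural approach: suppose $H = L(B)$ is properly contained in a prime induced subgraph $H'$ of $L(G)$, so $H'$ contains an edge $e$ of $G$ not in $B$. Since $B$ is a maximal $2$-connected subgraph (a block), $e$ attaches to the rest of $G$ only through a cut vertex $v$ of $B$; the edges sharing the endpoint $v$ form a clique in $L(G)$ that should separate the ``new'' part from the interior of $L(B)$, contradicting primeness of $H'$. The analogous argument handles $H = L(E(v))$: any edge of $G$ adjacent in $L(G)$ to $E(v)$ but outside it lives in a block at $v$, separated by a clique of edges through some cut vertex. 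Conversely, every maximal prime induced subgraph of $L(G)$ must arise this way: given any prime induced subgraph, the underlying edge set cannot cross a cut vertex of $G$ into two blocks without introducing a clique separator (the star of edges through that cut vertex), so the underlying edges lie within a single block or within a single vertex-star, placing $H$ inside some member of ${\cal L}(G)$, and maximality forces equality.

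\textbf{The furthermore clause.} For the second assertion, assume $|{\cal L}(G)| \ge 2$, so $G$ is not a single block but has at least one cut vertex, and let $B$ be a block of $G$ with $|V(B)| \ge 3$. The set $\borda{B}$ in $L(G)$ consists of the edges of $B$ that also lie in another mp-subgraph of $L(G)$; these are precisely the edges of $B$ incident to a cut vertex of $G$ that lies on $B$ (such edges also belong to the corresponding star $L(E(v))$). I must show $|L(\borda{B})| \ge 2$, i.e.\ that at least two edges of $B$ are incident to cut vertices of $G$. Since $|{\cal L}(G)| \ge 2$, there is at least one cut vertex $v$ on $B$; because $|V(B)| \ge 3$ and $B$ is $2$-connected (or a $K_2$ bridge), the vertex $v$ is incident in $B$ to at least two edges, each of which then lies in both $L(B)$ and $L(E(v))$. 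These two distinct edges give $|L(\borda{B})| \ge 2$.

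\textbf{Main obstacle.} The hard part will be the converse direction of the first claim, namely showing rigorously that \emph{every} mp-subgraph of $L(G)$ has the prescribed form — one must argue carefully that no prime induced subgraph of $L(G)$ can span two distinct blocks of $G$ (the star of edges at the shared cut vertex is the separating clique) and that a star-type subgraph at a cut vertex cannot be extended, which requires a clean case analysis of how edges of $G$ meet at cut vertices. I would handle this by translating the clique-separator structure of $L(G)$ back to the cut-vertex/block structure of $G$ and invoking that the edges through a cut vertex always form a clique in $L(G)$.

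===  END PLAN  ===
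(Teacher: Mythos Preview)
Your plan is correct and follows essentially the same route as the paper: primeness of each member of ${\cal L}(G)$ via Lemma~\ref{lem:2Cprime} and the observation that stars give cliques, maximality and exhaustiveness via the clique separator $E(v)\cap H$ for the relevant $v$-component $H$, and the ``furthermore'' clause via the fact that a cut vertex of $G$ lying in a block $B$ with $|V(B)|\ge 3$ has at least two incident edges in $B$. One small wording caution for the converse: the star $L(E(v))$ itself \emph{does} span several blocks at $v$, so the precise dichotomy (as the paper phrases it) is that an mp-subgraph spanning two blocks through a cut vertex $v$ \emph{and containing an edge not incident to $v$} acquires a clique separator --- your conclusion ``single block or single vertex-star'' is right, but the justification should be stated with that extra clause.
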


\begin{proof}
	By Lemma~\ref{lem:2Cprime}, $L(B)$ is a prime graph for every block $B$ of $G$.	Since $L(E(v))$ is a clique, then $L(E(v))$ also induces a prime subgraph for every cut vertex $v$.
	The maximality of each $M \in {\cal L}$ comes from the fact that
	$M \cup S$ for $S \subset H$ where $H$ is a $v$-component for $v \in M$ has $E(v) \cap H$ as a clique separator.
	Suppose by contradiction that $L(G)$ has an mp-subgraph $M' \not\in {\cal L}$. Therefore, $M'$ contains vertices that are edges of different blocks of $G$ sharing a cut vertex $v$ and at least one of these edges, say $uw$, is not incident to $v$. Now observe that $E(v) \cap H$, where $H$ is the $v$ component containing $uw$ is a clique separator of $M'$, which is a contradiction.
	
	Finally, let $|{\cal L}(G)| \ge 2$ and $B$ be a block of $G$ with at least $3$ vertices. Since there is a cut vertex $v$ of $G$ belonging to $B$, it holds $|L(\borda{B})| \ge 2$ because $v$ has at least 2 neighbors in $B$.
\end{proof}

\begin{lemma} \label{lem:inter}
	If $G$ is a line graph, then $G \in \Gamma_2$.	
\end{lemma}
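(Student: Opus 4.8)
The plan is to combine the local characterisation of $\Gamma_2$ from Lemma~\ref{lem:intGraph} with the explicit list of mp-subgraphs of a line graph from Lemma~\ref{lem:mp-subgraphs}. Write $G = L(H)$ for a root graph $H$. By Lemma~\ref{lem:intGraph} it is enough to show that every vertex of $G$ lies in at most two mp-subgraphs, so I would fix a vertex of $G$, view it as an edge $e = uv$ of $H$, and bound the number of mp-subgraphs containing $e$.

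By Lemma~\ref{lem:mp-subgraphs} the mp-subgraphs of $G$ are precisely the prime graphs $L(B)$, for blocks $B$ of $H$, together with the cliques $L(E(w))$, for cut vertices $w$ of $H$. Since the blocks of $H$ partition its edges, $e$ lies in $L(B)$ for exactly one block $B$. A star $L(E(w))$ contains $e$ only when $w$ is an endpoint of $e$, so beyond $L(B)$ the only candidate mp-subgraphs containing $e$ are $L(E(u))$ and $L(E(v))$, each arising exactly when the corresponding endpoint is a cut vertex. Thus \emph{a priori} $e$ could lie in three of the listed sets, and the substance of the lemma is to show that the genuine total is always at most two.

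I would split on the block $B$ containing $e$. If $B$ is a bridge, then $L(B)$ is a single vertex; being contained in a neighbouring clique $L(E(w))$, it is not a maximal prime subgraph, so it does not count separately and $e$ is covered only by the stars at its endpoints. If $B$ has at least three vertices, then $L(B)$ is a bona fide prime mp-subgraph, and here I would bring in the ``furthermore'' part of Lemma~\ref{lem:mp-subgraphs}, namely $|L(\borda{B})| \ge 2$, to understand how the boundary of the mp-subgraph $L(B)$ meets the stars at the cut vertices of $B$. The step I expect to be the main obstacle is exactly this configuration: when $e$ lies in a $2$-connected block and \emph{both} of its endpoints are cut vertices, the three sets $L(B)$, $L(E(u))$ and $L(E(v))$ all contain $e$ and are pairwise distinct, so a direct count yields three. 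Forcing the bound down to two in this case is the heart of the argument, and it is where I would have to show, using this boundary information, that two of these three sets must coincide, nest, or fail to be maximal prime.
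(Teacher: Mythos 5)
Your proposal follows the same skeleton as the paper's proof --- reduce via Lemma~\ref{lem:intGraph} to bounding the number of mp-subgraphs containing a fixed vertex $e = uv$ of $G = L(H)$, then use the list of candidate mp-subgraphs from Lemma~\ref{lem:mp-subgraphs} --- but it is not a proof: the case you yourself call ``the heart of the argument'', namely $e$ lying in a $2$-connected block $B$ of $H$ with \emph{both} endpoints $u,v$ cut vertices, is left completely open. You only state what would have to happen (two of $L(B)$, $L(E(u))$, $L(E(v))$ coincide, nest, or fail to be maximal prime) without establishing any of it. For comparison, the paper's own proof consists of exactly your first two steps plus the bare, unjustified assertion that every vertex of $G$ belongs to at most two sets of ${\cal L}(H)$.

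The deeper point is that this gap cannot be filled, because in the configuration you isolated the three sets really are three distinct mp-subgraphs. Take $H$ to be the bull: a triangle $uvw$ with pendant edges $ux$ and $vy$. Then $G = L(H)$ is the gem, i.e., the path $ux, uw, vw, vy$ plus the universal vertex $uv$. This $G$ is chordal, and a connected chordal graph with no clique separator must be complete (every minimal separator of a chordal graph is a clique), so the mp-subgraphs of $G$ are precisely its maximal cliques: $L(B) = \{uv,uw,vw\}$, $L(E(u)) = \{uv,uw,ux\}$ and $L(E(v)) = \{uv,vw,vy\}$. These three mp-subgraphs pairwise intersect (all contain the vertex $uv$), so $\Omega({\cal M}(G))$ is $K_3$, not a tree, and $G \notin \Gamma_2$. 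Hence the lemma itself is false as stated, and the paper's unproved claim fails at exactly the case you flagged; what you identified as the main obstacle is in fact a counterexample, so the correct outcome of your analysis is a refutation, not a missing clever step. (Incidentally, your handling of the bridge case also quietly corrects Lemma~\ref{lem:mp-subgraphs}: for a bridge $B$ incident to a cut vertex, the singleton $L(B)$ is contained in a star $L(E(w))$ and thus is not an mp-subgraph, so ${\cal L}(H)$ as defined is not literally the family of mp-subgraphs either.)
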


\begin{proof}
	By Lemma~\ref{lem:mp-subgraphs}, ${\cal M}(G) = {\cal L}(G)$.
	Now, the result follows from the fact that every vertex of $G$ belongs to at most two sets of ${\cal L}(G)$ and Lemma~\ref{lem:intGraph}.
\end{proof}

\begin{corollary}
	The monophonic rank of a line graph can be found in linear time.
\end{corollary}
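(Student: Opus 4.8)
The plan is to invoke Corollary~\ref{cor:gamma2} with the specific time complexities $\alpha$ and $\alpha'$ that arise for line graphs, and to observe that both quantities are constant (or at worst linear in local size) in this setting. First I would note that by Lemma~\ref{lem:inter}, every line graph $G$ belongs to $\Gamma_2$, so Corollary~\ref{cor:gamma2} applies and gives a running time of $O(n(m + \alpha + \alpha'))$, where $\alpha$ is the cost of computing the clique number of a prime graph $G_r$ and $\alpha'$ is the cost of computing a maximum stamen set of a petal mp-subgraph. The key point will be to bound $m$, $\alpha$, and $\alpha'$ sharply enough to collapse the bound to linear time.

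The main substance of the argument is controlling the clique-number computation, since for general graphs this is intractable, but for line graphs it is not. By Lemma~\ref{lem:mp-subgraphs}, the mp-subgraphs of a line graph $L(H)$ are precisely the members of ${\cal L}(H)$, i.e., line graphs of blocks of $H$ together with line graphs of edge-stars $L(E(v))$ at cut vertices. A maximum clique in $L(H)$ corresponds either to a set of mutually adjacent edges in $H$ sharing a common endpoint (a star) or to the edge set of a triangle in $H$. Hence the clique number of each mp-subgraph is $\max\{\Delta, 3\}$ restricted to the relevant block, where $\Delta$ is the maximum degree, and this can be read off in time linear in the size of the block. Summing over all blocks and cut-vertex stars, the total cost of all clique-number evaluations is $O(m)$, so effectively $\alpha$ contributes only $O(1)$ amortized per mp-subgraph.

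Next I would bound $\alpha'$, the cost of computing a maximum stamen set. Recall that a stamen set for a petal $M$ is determined by $\semborda{M}$, the clique structure of $\borda{M}$, and the Type~1/2/3 conditions; in a line graph this translates into a purely local check at the corresponding vertex or block of $H$. Since $\borda{M}$ corresponds to the edges incident to a cut vertex $v$ inside a single block (Lemma~\ref{lem:mp-subgraphs} guarantees $|L(\borda{B})| \ge 2$ when the block has at least $3$ vertices), deciding the maximum stamen size reduces to inspecting the neighborhood of $v$ within that block, again at cost linear in the block size. Amortized over the tree $\Omega({\cal M}(G))$, whose size is $O(n)$ by Lemma~\ref{lem:inter}, the total contribution is again $O(m)$.

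The step I expect to require the most care is the bookkeeping that turns the generic $O(n(m+\alpha+\alpha'))$ bound of Corollary~\ref{cor:gamma2} into $O(n+m)$: naively multiplying $n$ by an $O(m)$ per-node cost gives $O(nm)$, not linear time. The resolution is to observe that the per-node costs are \emph{not} uniform but local, so the meta-algorithm of Theorem~\ref{the:metaalg} processes each mp-subgraph exactly once, and the sum of the local block sizes is $O(m)$ rather than $O(nm)$; one must also account for constructing $\Omega({\cal M}(G))={\cal L}(G)$, which for line graphs can be obtained in linear time from a root graph $H \in L^{-1}(G)$ (computable in linear time by Roussopoulos/Lehot recognition). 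Assembling these pieces, the total running time is $O(n+m)$, i.e.\ linear, which establishes the corollary.
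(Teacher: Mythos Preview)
Your approach diverges substantially from the paper's, and the amortization step you flag as ``requiring the most care'' is where the argument is genuinely incomplete.

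The paper does \emph{not} run the general $\Gamma_2$ algorithm and amortize. Instead it proves a structural fact that short-circuits the whole machinery: in a line graph that is not prime, \emph{every} stamen set of a petal mp-subgraph has size exactly~$1$. The argument is that if $u,v\in\semborda{M}$ were both adjacent to some $w\in\borda{M}$, then translating $u,v,w$ back to edges of the root graph forces them to share an endpoint, which collides with Lemma~\ref{lem:mp-subgraphs} (it guarantees $|\borda{M}|\ge 2$). Once $s(M)=1$ everywhere, Lemma~\ref{lem:numberofleaves} says no flower subgraph beats $G$ itself, and one obtains the closed form $r(G)=\max\{\Delta(G^L),\ell\}$, where $\ell$ is the number of leaf blocks of the root graph $G^L$. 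This is computed in linear time directly from $G^L$ (via Lehot's recognition), with no invocation of Corollary~\ref{cor:gamma2} or the meta-algorithm at all.

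Your route, by contrast, tries to instantiate Corollary~\ref{cor:gamma2} and then argue the $O(n(m+\alpha+\alpha'))$ bound collapses to $O(n+m)$. Two problems. First, Theorem~\ref{the:metaalg} is stated with \emph{worst-case} per-call costs $\alpha,\alpha'$, not an amortized sum; saying ``the per-node costs are local so the sum is $O(m)$'' is a new claim that you have not proved. In particular, $s(M_{w,u})$ must be computed for every directed edge $(w,u)$ of $T$, so the relevant sum is $\sum_{w}\deg_T(w)\cdot(\text{cost at }M_w)$, and you have not bounded this. Second, your description of how to compute the maximum stamen set (``inspect the neighborhood of $v$ within that block'') is exactly the observation that, pushed one step further, yields the paper's stamen-size-$1$ lemma --- but you stop short of drawing that conclusion, so you never get the constant-time evaluation that would make the amortization trivial. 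Establishing that lemma is the missing idea; with it, your plan collapses to the paper's closed-form formula, and without it the linear-time claim is not justified.
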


\begin{proof}
	We begin showing that for line graphs, every stamen set of a petal mp-subgraph $M$ of a line graph $G$ that is not prime is $1$. Then, suppose the contrary and let $u,v \in \semborda{M}$ and $w \in \borda{M}$ such that $uw, vw \in E(G)$. We can write $u = u_1u_2$, $v = v_1v_2$ and $w = w_1w_2$ where $u_1,u_2,v_1,v_2,w_1$ and $w_2$ are vertices of $G^L \in L^{-1}(G)$. Without loss of generality, we can assume that $u_1 = v_1 = w_1$ while $u_2,v_2$ and $w_2$ are distinct. Therefore, $w_2$ is a cut vertex of $G^L$ and $u$ has no more neighbors in $\borda{M}$, which implies that $|\borda{M}| = 1$. But this a contradiction because Lemma~\ref{lem:mp-subgraphs} implies that $|\borda{M}| \ge 2$.
	
	If $G$ is a prime graph, then $G^L$ is a 2-connected graph and the maximum stamen set of $G$ is the maximum degree of $G^L$.
	
	Finally, Corollary~\ref{cor:mrank} and Lemma~\ref{lem:numberofleaves} imply that we only need to consider one flower subgraph of $G$, namely, itself. Therefore, $r(G) = \max \left\{ \Delta(G^L), \ell ) \right\}$, where $\ell$ stands for the number of leaf blocks of $G^L$. Since a member of $G^L$ can be found in linear time~(\cite{lehot1974optimal}), the result does hold.
\end{proof}

\section{Starlike graphs} \label{sec:starlike}

A graph $G$ is \emph{starlike} if $V(G)$ can be partitioned into cliques $(V_0, \ldots, V_t)$ such that $V_0$ is a maximal clique and for every $\ell \in \{1, \ldots, t\}$ and $v_i,v_j \in V_{\ell}$, it holds $N[v_i] - V_\ell = N[v_j] - V_\ell \subset V_0$. If $|V_\ell| \le k$ for $\ell \in \{1, \ldots, t\}$, then $G$ is \emph{$k$-starlike}~(\cite{Gustedt:1993,Cerioli2006}). The 1-starlike graphs are precisely the split graphs and every starlike graph is a chordal graph.

In this section, we describe a polynomial-time algorithm for determining the monophonic rank of a $1$-starlike graph and show that this problem is $\NP$-complete for $k$-starlike graphs for any fixed $k \ge 2$.

\bigskip
\noindent {\sc Independent set} \\
\begin{tabular}{lp{14.5cm}}
	Instance: & A graph $G$ and a positive integer $k$. \\
	Question: & Does $G$ contain an independent set with at least $k$ vertices?
\end{tabular}
\bigskip

\begin{theorem}\label{sec:1starlike}
	The {\sc Monophonic rank} problem restricted to split graphs belongs to $\P$. 
\end{theorem}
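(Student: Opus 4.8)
The plan is to work directly with the monophonic convexity on a split partition $V(G)=K\cup I$, where $K$ is a maximum clique and $I$ is an independent set; such a partition is computable in polynomial time. First I would record three structural facts that tame the convexity. (i) Every induced path of $G$ has at most two internal vertices, and all internal vertices lie in $K$: an internal vertex has two non-adjacent neighbours on the path, so it is not simplicial and hence lies in $K$, and $K$ being a clique forbids three pairwise non-adjacent internal vertices. (ii) Every vertex of $I$ is simplicial, hence an extreme point, so it never enters the convex hull of a set avoiding it. (iii) Consequently $\langle T\rangle$ adds only vertices of $K$ to $T$. Fact (i) confines all captures to induced paths of length at most three, which is what makes the analysis local and finite.

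Next I would prove a structural characterization of the m-convexly independent sets. Writing $S_K=S\cap K$ and $S_I=S\cap I$, a vertex is captured only if some $b\in S_K$ lies internally on an induced path between two vertices of $S-\{b\}$, whose endpoints and $K$-internal vertices are constrained by (i)--(iii). From length-two paths $x-b-y$ one gets that $G[S]$ must be a disjoint union of cliques; since $K$ is a clique, all of $S_K$ lies in a single such clique $Q$, which contains at most one vertex $v^{*}\in I$, while every other vertex of $S$ is an isolated vertex of $G[S]$ lying in $I$ and non-adjacent to $S_K$. From length-three paths $z-c-b-v^{*}$ one extracts the extra requirement that, when $v^{*}$ and $S_K$ coexist, every isolated $y\in S_I$ satisfies $N(y)\subseteq N(v^{*})$. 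These conditions are visibly necessary.

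The hard part will be the sufficiency direction, because the local (length $\le 3$) non-capture conditions must be shown to prevent capture even after iterating the interval operator. I would prove the key closure lemma: under the domination condition $N(y)\subseteq N(v^{*})$, every vertex of $K$ that enters $\langle S-\{b\}\rangle$ is adjacent to $v^{*}$. This follows by induction on the interval iterations using (i)--(iii): any newly added $K$-vertex is internal to a short induced path whose relevant endpoint is an $I$-vertex of $S$, and every such neighbourhood is contained in $N(v^{*})$, so the new vertex is adjacent to $v^{*}$. Once every $K$-vertex of the hull is adjacent to $v^{*}$, and noting that $b$'s only $I$-neighbour available as a path endpoint is $v^{*}$ itself, fact (i) shows no induced path can have $b$ as an internal vertex, whence $b\notin\langle S-\{b\}\rangle$. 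The case with no $v^{*}$ is easier, since then $S_K$ has no neighbour in $S_I$ and the same argument shows no $b\in S_K$ is ever captured.

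With the characterization in hand, computing $r(G)$ reduces to a polynomial optimization. Each m-convexly independent set is determined by the clique $Q$ (equivalently by its optional vertex $v^{*}\in I$ and by $S_K\subseteq K$) together with a set $R\subseteq I$ of isolated vertices non-adjacent to $S_K$. When $v^{*}$ is absent, maximizing $|S_K|+|R|$ subject to there being no edge between $S_K$ and $R$ is exactly a maximum independent set in the bipartite graph $B$ on $K\cup I$ whose edges are the $K$--$I$ edges of $G$, which is solvable in polynomial time by König's theorem via bipartite matching. When $v^{*}$ is present I would enumerate the at most $|I|$ choices of $v^{*}$ and solve the analogous bipartite maximum independent set problem restricted to $S_K\subseteq N(v^{*})\cap K$ and to the isolated $y$ with $N(y)\subseteq N(v^{*})$, adding one for $v^{*}$. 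Taking the maximum over these polynomially many bipartite computations yields $r(G)$ in polynomial time, and one can check the output against the examples (a clique gives $\omega(G)$, and a graph with a dominating $I$-vertex recovers the full simplicial set).
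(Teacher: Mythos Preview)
Your approach is correct but takes a genuinely different route from the paper. The paper gives a one-shot reduction: with $C$ a maximum clique, it forms the bipartite graph $G'$ on $C\cup I$ by deleting all edges inside $C$ and argues directly that $G$ has an m-convexly independent set of size at least $k$ if and only if $G'$ has an independent set of size at least $k$; a single bipartite independent-set computation then suffices. You instead establish a complete structural characterization of the m-convexly independent sets and optimize over two regimes (with and without the distinguished vertex $v^{*}$), running one bipartite independent-set instance per candidate $v^{*}$. In fact your $v^{*}$-branches are redundant: given any m-convexly independent $S=S_K\cup\{v^{*}\}\cup R$ satisfying your domination condition $N(y)\subseteq N(v^{*})$ for all $y\in R$, pick any $c\in K\setminus N(v^{*})$ (which exists because $K$ is a \emph{maximum} clique); then $c\notin N(y)$ for every $y\in R$, so $(S_K\cup\{c\})\cup R$ is m-convexly independent of the same size with no $v^{*}$, hence an independent set in $G'$. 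Thus your ``no $v^{*}$'' case alone already yields $r(G)=\alpha(G')$, recovering the paper's formula. The paper's argument is more economical; your route buys a full description of all m-convexly independent sets of a split graph, which is of independent interest and makes the sufficiency direction (your closure lemma) entirely transparent.
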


\begin{proof}
	Since {\sc Independent set} restricted for bipartite graphs belongs to $\P$~(\cite{Garey1}), it suffices to present a polynomial reduction from {\sc Monophonic rank} restricted to split graphs to {\sc Independent set} restricted to bipartite graphs.
	
	Let $G$ be a split graph with bipartition $(C,I)$. We can assume that $C$ is a maximum clique. Construct a bipartite graph $G'$ from a copy of $G$ by deleting the edges joining vertices of $C$. Denote $(C',I')$ the bipartition of $G'$, where $C' = \{v'_i : v_i \in C\}$. See Figure~\ref{fig:splitP}. We will show that $G$ has an m-convexly independent set of size at least $k$ if and only if $G'$ has an independent set of size at least $k$.
	
	\begin{figure}[h]
		\centering
		
		\begin{tikzpicture}[scale=0.6]
		
		\pgfsetlinewidth{1pt}
		
		\tikzset{
			vertex/.style={circle,  draw, minimum size=5pt, inner sep=0pt}}
		
		\begin{scope}[shift={(25,0)}]
		
		\draw (7,5) node[above] {$G'$};
		
		\node [vertex] (v1) at (5,4) [label=left:$v_1$]{};
		\node [vertex, fill=black] (v2) at (5,2) [label=left:$v_2$]{}; 
		\node[vertex, fill=black]  (v3) at (5,0) [label=left:$v_3$]{};
		\node[vertex, fill=black] (v4) at (5,1) [label=left:$v_4$]{};
		\node [vertex] (v5) at (5,-2) [label=left:$v_5$]{};
		\node [vertex, fill=black]  (v6) at (5,-3) [label=left:$v_5$]{}  ;
		\node[vertex, fill=black] (v7) at (5,-4) [label=left:$v_5$]{};
		
		\node [vertex] (w1) at (10,2) [label=right:]{} edge (v1);
		\node [vertex, fill=black] (w1') at (10,4) [label=right:]{} edge (v1);
		\node [vertex, fill=black] (w1'') at (10,4.5) [label=right:]{} edge (v1);
		\node [vertex, fill=black] (w1'') at (10,5) [label=right:]{} edge (v1);
		\node [vertex] (w2) at (10,2) [label=right:]{} edge (v2) edge (v4) edge (v3);
		
		\node [vertex, fill=black] (w4) at (10,-4) [label=right:]{}  edge(v5);
		
		\node [vertex] (w3) at (10,-0.5) [label=right:]{} edge (v3);
		\node [vertex, fill=black] (w4) at (10,-3.5) [label=right:]{}  edge(v5);
		
		\draw[dashed] (5,0) ellipse (2cm and 5.5cm);
		\draw[dashed] (10,0) ellipse (2cm and 5.5cm);
		
		\end{scope}
		
		\begin{scope}[shift={(20,0)}]
		\draw (-2,5) node[above] {$G$};
		\node [vertex] (v'1) at (-4,4) [label=left:$v_1$]{};
		\node [vertex, fill=black]  (v'2) at (-4,2) [label=left:$v_2$]{} ;
		\node [vertex, fill=black]  (v'3) at (-4,1) [label=left:$v_3$]{};
		\node [vertex, fill=black] (v'4) at (-4,0) [label=left:$v_4$]{};
		\node [vertex] (v'5) at (-4,-2) [label=left:$v_5$]{};
		\node [vertex, fill=black]  (v'6) at (-4,-3) [label=left:$v_5$]{} ;
		\node [vertex, fill=black]  (v'7) at (-4,-4) [label=left:$v_5$]{} ;
		
		\node [vertex, fill=black] (w'1') at (1,4) [label=right:]{} edge (v'1);
		\node [vertex, fill=black] (w'1'') at (1,4.5) [label=right:]{} edge (v'1);
		\node [vertex, fill=black] (w'1'') at (1,5) [label=right:]{} edge (v'1);
		\node [vertex] (w'1) at (1,2) [label=right:]{} edge (v'1);
		\node [vertex] (w'2) at (1,2) [label=right:]{} edge (v'2) edge (v'4) edge (v'3)  ;
		
		\node [vertex, fill=black] (w'4) at (1,-4) [label=right:]{}  edge(v'5);
		
		\node [vertex] (w'3) at (1,-0.5) [label=right:]{} edge (v'4);
		\node [vertex, fill=black] (w'4) at (1,-3.5) [label=right:]{} edge(v'5);
		
		\draw[-] (-4,0) ellipse (2cm and 5.7cm);
		\draw[dashed] (1,0) ellipse (2cm and 5.7cm);
		\end{scope}
		\end{tikzpicture}
		
		\caption{Reduction from {\sc Monophonic rank problem} restricted to split graphs to {\sc Independent set problem} restricted to bipartite graphs. The vertices inside of the oval with a straight form a clique.}
		\label{fig:splitP}
		
	\end{figure}
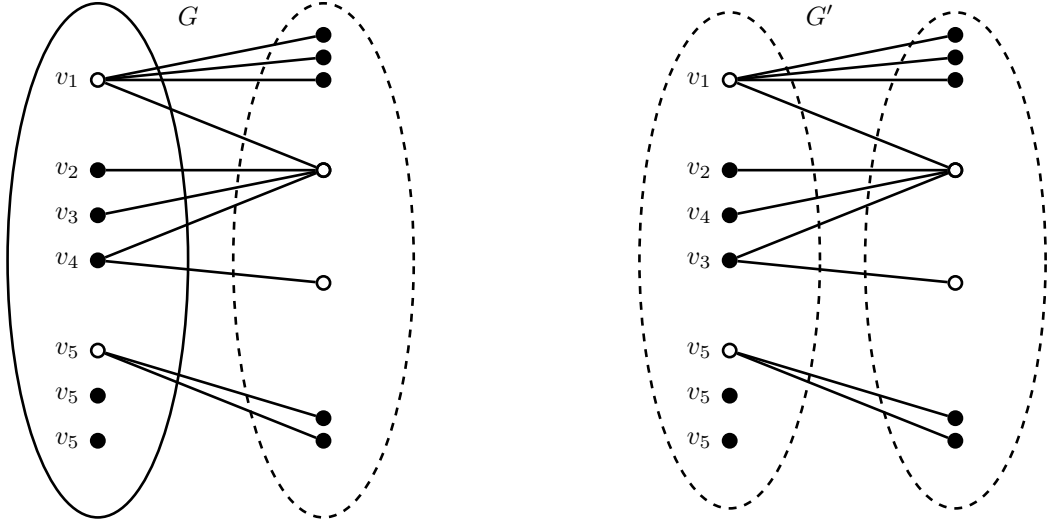	
	
	First, let $S$ be an m-convexly independent set of size at least $k$ of $G$. If $|S| \leq |C|$, then $G'$ has an independent set of size at least $k$ contained in $C'$ because $C'$ is an independent set. Then, consider $|S| > |C|$ and define $S' = \{v'_i : v_i \in S\}$. Suppose that $v'_iv'_j \in E(G)$ for $v'_i,v'_j \in S'$. By the construction of $G'$, we can assume that $v_i \in C$ and $v_j \in I$. Since $S$ is not a clique, there is $v_\ell$ such that $v_\ell v_j \not\in E(G)$. Now, observe that $v_i$ belongs to a monophonic path between $v_j$ and $v_\ell$, which contradicts the assumption that $S$ is an m-convexly independent set of $G$.
	
	Conversely, let $S'$ be an independent set of $G'$ with at least $k$ vertices. Write $S=\{v_i:v'_{i}\in S'\}$. Observe that
	
\begin{displaymath}\langle S - \{v_i\} \rangle = (S - \{v_i\}) \bigcup \left( \underset{v_j\in(S - \{v_i\})\cap I}{\bigcup}N(v_j) \right).
\end{displaymath}

\noindent Since $S'$ is an independent set, we have that $v_i \not\in \langle S - \{v_i\} \rangle$, which means that $S$ is an m-convexly independent set of $G$.

\end{proof}

Now, we show that the problem of deciding whether a $2$-starlike graph has a convexly independent set with at least $k$ vertices is \NP-complete in the monophonic convexity. It is clear that the following variation of {\sc Independent set} is also $\NP$-complete~\cite{Garey1}.

\bigskip
\noindent {\sc Half independent set} \\
\begin{tabular}{lp{14.5cm}}
	Instance: & A graph $G$. \\
	Question: & Does $G$ contain an independent set with at least $\left\lceil\frac{|V(G)|+1}{2}\right\rceil$ vertices?
\end{tabular}
\bigskip

\begin{theorem}\label{the:NPC}
	{\sc Monophonic rank} is \NP-complete for $2$-starlike graphs.
\end{theorem}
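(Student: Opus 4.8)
The plan is to prove the two halves of \NP-completeness separately, treating membership first. For membership in \NP\ the certificate is a candidate set $S$; to verify that $S$ is m-convexly independent one checks, for each $v \in S$, that $v \notin \langle S - \{v\} \rangle$. Since the monophonic convex hull of a set is computable in polynomial time — equivalently, one may test the condition of Theorem~\ref{the:charac} by computing the minimum flower subgraph $H$ containing $\langle S \rangle$ and inspecting the stamen sets of its leaves — this verification runs in polynomial time, so {\sc Monophonic rank} is in \NP. (Recognition of $2$-starlike graphs is also polynomial, so the reduction below stays inside the class.)

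For hardness I would reduce from {\sc Half independent set}. Given an instance $G$ with vertex set $\{v_1,\dots,v_n\}$ (which may be assumed connected and without universal vertices), the plan is to build a $2$-starlike graph $H$ whose central maximal clique $V_0=\{c_1,\dots,c_n\}$ carries one vertex $c_i$ per vertex $v_i$, and to attach to each $c_i$ a private satellite clique of size exactly $2$ whose attachment set inside $V_0$ is dictated by the neighbourhood of $v_i$ in $G$. The design goal is that the maximum m-convexly independent set consists of a fixed baseline of all the simplicial satellite vertices (which are extreme, hence always convexly independent) together with a choice of clique vertices $c_i$ that can be added without being captured by a straddling induced path $a\!-\!c_i\!-\!b$ through $V_0$. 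The intended effect is that the capturing (straddle) constraints among the $c_i$ reproduce exactly the edges of $G$, so that the addable clique vertices correspond precisely to an independent set of $G$; the point of using size-$2$ satellites, rather than the size-$1$ satellites of the split ($1$-starlike) case in Theorem~\ref{sec:1starlike}, is that the two mutually adjacent private vertices make the associated auxiliary {\sc Independent set} instance a general (non-bipartite) graph instead of the bipartite incidence graph that lands that case in \P. Finally I would choose an additive constant in the construction so that $r(H)\ge k$ if and only if $\alpha(G)\ge \lceil (n+1)/2\rceil$, in the spirit of the $\max\{n,\omega+\omega\}$ formula used for the $\Gamma_3$ reduction.

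Correctness would then be argued through Theorem~\ref{the:charac} and Corollary~\ref{cor:mrank}. For the forward direction, from an independent set $I$ of $G$ I would exhibit the set $S$ consisting of the baseline satellite vertices together with $\{c_i : v_i\in I\}$ and verify, using the clique-tree (mp-subgraph) structure of the chordal graph $H$, that no vertex of $S$ lies in the monophonic hull of the rest: the satellite vertices are simplicial, and independence of $I$ is exactly what prevents a straddling induced path from capturing any chosen $c_i$. For the converse, given an m-convexly independent set $S$ of $H$ of size at least $k$, I would use the characterization to normalize $S$ (pushing it onto the baseline plus a set $T\subseteq V_0$), show that $T$ must avoid the straddle-forbidden pairs, and read off from $T$ an independent set of $G$ meeting the threshold.

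The main obstacle I expect is the hull analysis underpinning step (c): proving that for \emph{non-adjacent} $v_i,v_j$ the simultaneous inclusion of $c_i$ and $c_j$ creates \emph{no} spurious monophonic dependency, while for every edge it does, uniformly across vertices of high degree, where many satellites attach to the same clique vertex and multiple induced paths interact. Controlling these induced paths so that the capturing relation is \emph{exactly} the edge set of $G$ — and simultaneously certifying that the constructed graph genuinely admits a $2$-starlike partition and that the exact value of $r(H)$ matches the intended $\max/{+}$ expression — is the delicate part; I would handle it by invoking Theorem~\ref{the:charac} to reduce every such verification to a statement about leaves and stamen sets of flower subgraphs, rather than reasoning about hulls directly.
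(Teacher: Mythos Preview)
Your membership argument is fine and matches the paper. The hardness reduction, however, has a genuine gap in the construction itself, not merely in the analysis you flag as ``delicate''.

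You propose a central clique $V_0=\{c_1,\dots,c_n\}$ with one vertex per $v_i\in V(G)$, a single size-$2$ satellite pair per $c_i$ whose attachment in $V_0$ encodes $N_G(v_i)$, and a \emph{fixed} baseline consisting of all $2n$ satellite vertices; the addable $c_i$'s should then be exactly an independent set of $G$. This cannot work: once every satellite is in the set, essentially no $c_i$ can ever be added. If the satellite of $v_j$ attaches to $\{c_k:v_k\in N_G(v_j)\}$, then for any $v_i$ with two $G$-neighbours $v_j,v_k$ the path $a_j\,c_i\,a_k$ is induced and both endpoints lie in the baseline, so $c_i$ is captured regardless of which other $c_\ell$'s you select. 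If instead each satellite attaches only to its own $c_i$, then $a_i\,c_i\,c_j\,a_j$ is an induced path for every $j\neq i$, and again the baseline alone captures $c_i$. In either reading $r(H)=2n$ is a constant independent of $G$, so the reduction carries no information about $\alpha(G)$. Your explanation of why size-$2$ satellites should defeat the bipartite structure of the split-graph case is also not right: adding a single edge inside each pendant pair does not change which clique vertices lie on induced paths between pendants.

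The paper's reduction is structurally quite different. Each $v_i$ is blown up into $2n{+}1$ clique vertices $u_i^1,\dots,u_i^{2n+1}$ and $n$ satellite pairs, and there is \emph{no} fixed baseline. For an independent set $S\subseteq V(G)$ one takes, for every $v_i\in S$, all of $u_i^1,\dots,u_i^{2n}$ together with all of $w_i^1,\dots,w_i^{2n}$, and for $v_i\notin S$ only the single special vertex $u_i^{2n+1}$. The crux is that the satellites $w_i^q$ are wired to the clique vertices of the \emph{neighbours} of $v_i$, not to $v_i$'s own block; independence of $S$ then guarantees that no selected satellite is adjacent to any selected clique vertex, which is precisely what kills every capturing induced path. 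The $4n$-fold blow-up is what makes the counting separate the cases: each $v_i\in S$ contributes $4n$ vertices versus $1$ for $v_i\notin S$, so the threshold $p=n+(4n-1)\lceil(n+1)/2\rceil$ forces $|S|\ge\lceil(n+1)/2\rceil$. Neither the amplification nor the neighbour-wiring is present in your plan, and both are essential.
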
 

\begin{proof}
	Since the monophonic convex hull of a set can computed in polynomial time and one needs $|S|$ computations of the convex hull of a set to decide whether a set is convexly independent, the problem belongs to \NP. We present a reduction from {\sc Half independent set}.
	
	Let $G$ be a general graph with $n = |V(G)|$. We construct a $2$-starlike graph $G'$ as follows. For every vertex $v_i \in V(G)$, create $4n+1$ vertices $u_i^1, \ldots u_i^{2n+1},w_i^1, \ldots, w_i^{2n}$ in $G'$. Write $U = \{u_i^1, \ldots, u_i^{2n+1}\} : 1 \leq i \leq n\}$ and $W = \{w_i^1, \ldots, w_i^{2n} : 1 \leq i \leq n\}$. The edge set of $G'$ is obtained as follows 
	
	\begin{itemize}
		\item add the edges to make a clique of the set $U$;
		
		\item add the edges $u_i^{2n+1}w_i^{2n-1}$ and $u_i^{2n+1}w_i^{2n}$ for every $v_i \in V(G)$; and
		
		\item for every $v_iv_j \in E(G)$, add the edges $u_i^p w_j^q$ for $p,q \in \{1, \ldots, 2n\}$.	
	\end{itemize}
	
	\begin{figure}[h]
		\centering
		
		\begin{tikzpicture}[scale=0.55]
		
		\pgfsetlinewidth{1pt}
		
		\tikzset{vertex/.style={circle,  draw, minimum size=5pt, inner sep=1pt}}
		
		\draw (-9.5,5) node[above] {$G$};
		
		\node [vertex,fill=black] (v1) at (-9,4) [label=left:$v_1$]{};
		\node [vertex] (v2) at (-9,3) [label=left:$v_2$]{} edge (v1) edge (v2);
		\node [vertex,fill=black]  (v3) at (-9,2) [label=left:$v_3$]{} edge (v2);
		
		\draw (-4.5,9) node[above] {$G'$};
		
		
		
		\def\H{-1}
		\def\V{0}
		\def\i{1}
		
		\node [vertex, fill=black] (u11) at (\H + -1,\V + 7) [label=above:$u^1_\i$]{};
		\node [vertex, fill=black] (u12) at (\H + 0,\V + 7) [label=above:$u^2_\i$]{};
		\node [vertex, fill=black]  (u13) at (\H + 1,\V + 7) [label=above:$u^3_\i$]{};
		\node[vertex, fill=black]  (u14) at (\H + -1,\V + 6) [label=below:$u^4_\i$]{};
		\node [vertex, fill=black] (u15) at (\H + -0,\V + 6) [label=below:$u^5_\i$]{} ;
		\node[vertex, fill=black]  (u16) at (\H + 1,\V + 6) [label=below:$u^6_\i$]{};
		\node[vertex]  (u17a) at (\H + 3,\V + 6.5) [label=below:$u^7_\i$]{};
		\draw[dashed][black] (\H + 0,\V + 6.5) ellipse (2.5cm and 2.5cm);	
		
		\def\H{-1}
		\def\V{-6}
		\def\i{2}
		
		\node [vertex] (u11) at (\H + -1,\V + 7) [label=above:$u^1_\i$]{};
		\node [vertex] (u12) at (\H + 0,\V + 7) [label=above:$u^2_\i$]{};
		\node [vertex]  (u13) at (\H + 1,\V + 7) [label=above:$u^3_\i$]{};
		\node[vertex]  (u14) at (\H + -1,\V + 6) [label=below:$u^4_\i$]{};
		\node [vertex] (u15) at (\H + -0,\V + 6) [label=below:$u^5_\i$]{} ;
		\node[vertex]  (u16) at (\H + 1,\V + 6) [label=below:$u^6_\i$]{};
		\node[vertex, fill=black]  (u17b) at (\H + 3,\V+ 6.5) [label=below:$u^7_\i$]{};
		\draw[dashed][black] (\H + 0,\V + 6.5) ellipse (2.5cm and 2.5cm);	
		
		\def\H{-1}
		\def\V{-12}
		\def\i{3}
		
		\node [vertex, fill=black] (u11) at (\H + -1,\V + 7) [label=above:$u^1_\i$]{};
		\node [vertex, fill=black] (u12) at (\H + 0,\V + 7) [label=above:$u^2_\i$]{};
		\node [vertex, fill=black]  (u13) at (\H + 1,\V + 7) [label=above:$u^3_\i$]{};
		\node[vertex, fill=black]  (u14) at (\H + -1,\V + 6) [label=below:$u^4_\i$]{};
		\node [vertex, fill=black] (u15) at (\H + -0,\V + 6) [label=below:$u^5_\i$]{} ;
		\node[vertex, fill=black]  (u16) at (\H + 1,\V + 6) [label=below:$u^6_\i$]{};
		\node[vertex]  (u17) at (\H + 3,\V + 6.5) [label=below:$u^7_\i$]{};
		\draw[dashed][black] (\H + 0,\V + 6.5) ellipse (2.5cm and 2.5cm);	
		
		\draw[][black] (-1,0.5) ellipse (5cm and 10cm);

		
		
		\def\H{9}
		\def\V{0}
		\def\i{1}
		
		\node [vertex, fill=black] (u11) at (\H + -1,\V + 7) [label=above:$w^5_\i$]{}edge(u17a);
		\node [vertex, fill=black] (u12) at (\H + 0,\V + 7) [label=above:$w^3_\i$]{};
		\node [vertex, fill=black]  (u13) at (\H + 1,\V + 7) [label=above:$w^1_\i$]{};
		\node[vertex, fill=black]  (u14) at (\H + -1,\V + 6) [label=below:$w^6_\i$]{}edge(u11) edge(u17a);
		\node [vertex, fill=black] (u15) at (\H + -0,\V + 6) [label=below:$w^4_\i$]{} edge(u12);
		\node[vertex, fill=black]  (u16) at (\H + 1,\V + 6) [label=below:$w^2_\i$]{}edge(u13);
		\draw[dashed][black] (\H + 0,\V + 6.5) ellipse (2.5cm and 2.5cm);	
		
		\def\H{9}
		\def\V{-6}
		\def\i{2}
		
		\node [vertex] (u11) at (\H + -1,\V + 7) [label=above:$w^5_\i$]{}edge(u17b);
		\node [vertex] (u12) at (\H + 0,\V + 7) [label=above:$w^3_\i$]{};
		\node [vertex]  (u13) at (\H + 1,\V + 7) [label=above:$w^1_\i$]{};
		\node[vertex]  (u14) at (\H + -1,\V + 6) [label=below:$w^6_\i$]{}edge(u11) edge(u17b);
		\node [vertex] (u15) at (\H + -0,\V + 6) [label=below:$w^4_\i$]{} edge(u12);
		\node[vertex]  (u16) at (\H + 1,\V + 6) [label=below:$w^2_\i$]{}edge(u13);
		\draw[dashed][black] (\H + 0,\V + 6.5) ellipse (2.5cm and 2.5cm);	
		
		\def\H{9}
		\def\V{-12}
		\def\i{3}
		
		\node [vertex, fill=black] (u11) at (\H + -1,\V + 7) [label=above:$w^5_\i$]{}edge(u17);
		\node [vertex, fill=black] (u12) at (\H + 0,\V + 7) [label=above:$w^3_\i$]{};
		\node [vertex, fill=black]  (u13) at (\H + 1,\V + 7) [label=above:$w^1_\i$]{};
		\node[vertex, fill=black]  (u14) at (\H + -1,\V + 6) [label=below:$w^6_\i$]{}edge(u11) edge(u17);
		\node [vertex, fill=black] (u15) at (\H + -0,\V + 6) [label=below:$w^4_\i$]{} edge(u12);
		\node[vertex, fill=black]  (u16) at (\H + 1,\V + 6) [label=below:$w^2_\i$]{}edge(u13);
		\draw[dashed][black] (\H + 0,\V + 6.5) ellipse (2.5cm and 2.5cm);	
		
		\def\V{0}
		\draw[] (1,\V + 5) to (7.7,\V + 2.6);
		\draw[] (1,\V + 2) to (6.7,\V + 5.6);
		\def\V{-6}
		\draw[] ( 1,\V + 5) to (7.7, \V + 2.6);
		\draw[] ( 1,\V +  2) to (6.7,\V + 5.6);
		
		\end{tikzpicture}
		
		\caption{Sketch of the reduction from {\sc Half independent set problem} of general graph to {\sc Monophonic rank}  of 2-starlike graphs. A line joining two cycles with dashed lines means that there is an edge joining each vertex of one cycle to each vertex of the other.}
		\label{fig:my_label}
		
	\end{figure}
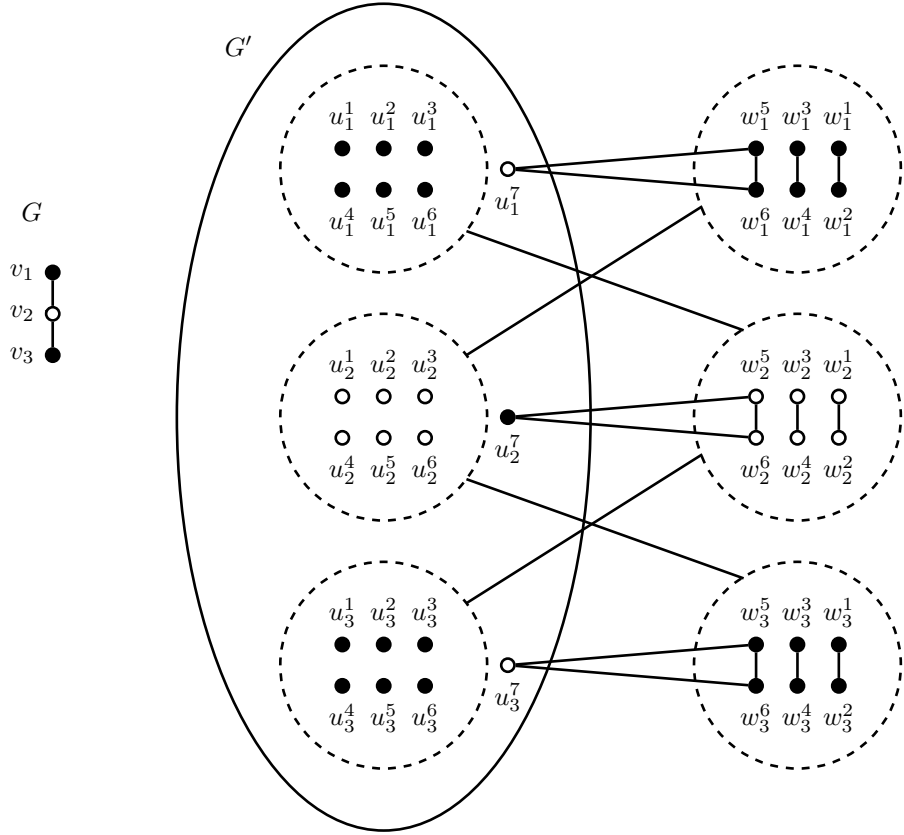
	
Since $U$ is a clique and for $i,\ell \in \{1, \ldots, n\}$, the vertices $w_i^{2\ell}$ and $w_i^{2\ell+1}$ are twins and each one has only one neighbor in $W$, it holds that $G'$ is a $2$-starlike graph. Note that the number of vertices of $G'$ is $4n^2+n$, which means that $G'$ can be constructed in polynomial time. To complete the instance of {\sc Monophonic rank} define

\begin{displaymath}
p = 4n \left\lceil\frac{n+1}{2}\right\rceil + n - \left\lceil\frac{n+1}{2}\right\rceil = n + (4n-1)\left\lceil\frac{n+1}{2}\right\rceil.
\end{displaymath}	
		
We will show that $G$ has an independent set with $\lceil\frac{n+1}{2}\rceil$ vertices if and only if $G'$ has a monophonic convexly independent set with $p$ vertices. Since every vertex has a neighbor in the set $U$, it is easy to see that the diameter of $G'$ is 3 (and a maximum induced path has length~3).
	
First, consider that $G$ has an independent set $S$ with $\left\lceil\frac{n+1}{2}\right\rceil$ vertices. Define $S' = \{u_i^1, \ldots, u_i^{2n}$, $w_i^1, \ldots, w_i^{2n} : v_i \in S\} \cup \{u_i^{2n+1} : v_i \not\in S\}$. Since $S'$ has $4n$ vertices for every vertex in $S$ and $1$ for each vertex not in $S$, it holds

\begin{displaymath}
|S'| = 4n \left\lceil\frac{n+1}{2}\right\rceil + (n - \left\lceil\frac{n+1}{2}\right\rceil) = p.
\end{displaymath}
 
Suppose that there is $u \in S'$ such that $u \in \langle S' - \{u\} \rangle$. Hence, $u$ belongs to some monophonic $(x,x')$-path for $x,x' \in \langle S' - \{u\} \rangle$. Since $W$ contains only simplicial vertices and no simplicial vertex is an internal vertex of an induced path, we conclude that $u \in U \cap S'$. Since $U$ is a clique, at least one of $x$ and $x'$ belongs to $W$. However, observe that for every $S'' \subseteq S'$, it holds $\langle S'' \rangle \subseteq (S'' \cap W) \cup U$, in other words, no vertex of $W - S''$ belongs to $\langle S'' \rangle$. Then, we can assume that $x \in W \cap S'$. By the construction of $S'$, no vertex of $W \cap S'$ has neighbors in $U$, which implies that $ux \not\in E(G')$. Therefore, $x'$ also belongs to $W \cap S'$. Since the maximum induced path of $G'$ has 4 vertices, $ux' \in E(G)$, which is a contradiction.
	
Conversely, assume that $G'$ has an m-convexly independent set $S'$ with at least $p$ vertices. Since $p > |U| > |W|$ for $n \ge 1$, $S'$ contains vertices of $U$ and of $W$. For $j \in \{0, \ldots, 4n+1\}$, denote by $Q_j \subseteq V(G)$ the set of vertices $v_i$ such that $S'$ contains $j$ vertices of the $4n+1$ vertices of $G'$ created for $v_i$. Observe that $S = \underset{i \in \{3, \ldots, 4n\}}{\bigcup} Q_i$ is an independent set and that $Q_{4n+1} = \varnothing$ because otherwise $S'$ would not be m-convexly independent. Suppose that $S < \lceil\frac{n+1}{2}\rceil$. Hence,

\begin{displaymath}
|S'| \le 4n\left(\left\lceil\frac{n+1}{2}\right\rceil-1 \right) + 2n = 4n\left\lceil\frac{n+1}{2}\right\rceil -2n < n + (4n-1)\left\lceil\frac{n+1}{2}\right\rceil = p,
\end{displaymath}

\noindent which is a contradiction.
\end{proof}

The above result can be extended for $k$-starlike with $k \ge 2$.

\begin{corollary}
	For any fixed integer $k \ge 2$, the {\sc monophonic rank} problem is \NP-complete for $k$-starlike graphs.
\end{corollary}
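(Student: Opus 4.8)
The plan is to reduce everything to the already-established Theorem~\ref{the:NPC} by exploiting a containment of graph classes, so that no new construction is needed. First I would observe that $k$-starlikeness is monotone in $k$: if $G$ is $2$-starlike, then there is a partition $(V_0, \ldots, V_t)$ of $V(G)$ into cliques with $V_0$ a maximal clique, satisfying $N[v_i] - V_\ell = N[v_j] - V_\ell \subset V_0$ for all $v_i, v_j \in V_\ell$, and with $|V_\ell| \le 2$ for every $\ell \in \{1, \ldots, t\}$. Since $2 \le k$, the very same partition witnesses that $G$ is $k$-starlike. Hence the class of $2$-starlike graphs is contained in the class of $k$-starlike graphs for every fixed $k \ge 2$.

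Next, the hardness transfers upward along this inclusion. The reduction in the proof of Theorem~\ref{the:NPC} maps an instance $G$ of {\sc Half independent set} to a $2$-starlike graph $G'$ together with a target value $p$, in polynomial time, so that $G$ has an independent set of size $\lceil (n+1)/2 \rceil$ if and only if $G'$ has a monophonic convexly independent set of size $p$. By the containment just established, $G'$ is also a $k$-starlike graph, so the identical reduction certifies that {\sc Monophonic rank} is \NP-hard when restricted to $k$-starlike graphs; a polynomial-time algorithm for the $k$-starlike case would in particular decide the $2$-starlike case, contradicting Theorem~\ref{the:NPC}.

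Finally, I would note that membership in \NP carries over verbatim: as argued at the start of the proof of Theorem~\ref{the:NPC}, the monophonic convex hull of a set can be computed in polynomial time, and deciding whether a candidate set $S$ is m-convexly independent requires only $|S|$ such hull computations. This argument is oblivious to any structural restriction on the input, hence applies to $k$-starlike graphs, and combining \NP-hardness with \NP-membership yields the claim. The only point requiring a moment's care — the closest thing here to an obstacle — is verifying that reusing the $2$-starlike partition does not violate the requirement that $V_0$ be a \emph{maximal} clique; but this is immediate, since neither $V_0$ nor any adjacency of $G'$ is altered when we merely relax the size bound on the remaining cliques from $2$ to $k$.
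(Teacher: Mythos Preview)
Your argument is correct, and it takes a genuinely different route from the paper's. The paper's proof does not invoke the containment of $2$-starlike graphs in $k$-starlike graphs; instead it sketches a modification of the reduction from Theorem~\ref{the:NPC}, enlarging each gadget so that the side-cliques have size (up to) $k$ rather than $2$, and adjusting the threshold $p$ accordingly. Your approach is more elementary: once Theorem~\ref{the:NPC} is in hand, the monotonicity of $k$-starlikeness in $k$ transfers hardness for free, with no new construction and no recomputation of $p$. What the paper's route buys, in exchange for the extra work, is that the produced instances can be made to actually use cliques of size $k$ outside $V_0$---so the hardness holds even for graphs that are $k$-starlike but not $(k-1)$-starlike---whereas your reduction outputs $2$-starlike instances regardless of $k$. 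The corollary as stated does not demand this stronger conclusion, so your proof fully suffices; but it is worth being aware of the distinction in case the sharper version is ever needed.
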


\begin{proof}
	It suffices to change the reduction presented in Theorem~\ref{the:NPC} by adding $kn$ vertices in $G'$ for every vertex of $G$ and defining $p = n + kn \left\lceil\frac{n+1}{2}\right\rceil$.
\end{proof}

\nocite{*}
\bibliographystyle{abbrvnat}
\bibliography{mrank}
\label{sec:biblio}

\end{document}